\titleformat{\section}{\normalfont\large\bfseries\centering}{\thesection}{1em}{}
\titleformat{\subsection}{\normalfont\normalsize\bfseries\centering}{\thesubsection}{1em}{}
\titleformat{\subsubsection}{\normalfont\normalsize\itshape}{\thesubsubsection}{1em}{}
\pgfplotsset{compat=1.17}
\theoremstyle{plain}
\newtheorem{proposition}{Proposition}
\newtheorem{lemma}{Lemma}
\newtheorem{corollary}{Corollary}
\newtheorem{assumption}{Assumption}
\theoremstyle{definition}
\newtheorem{definition}{Definition}
\theoremstyle{remark}
\newtheorem{remark}{Remark}
\newtheorem{example}{Example}
\definecolor{maroon}{rgb}{0.5, 0.0, 0.0}
\definecolor{oxfordblue}{rgb}{0.0, 0.13, 0.28}
\newcommand{\E}{\mathbb{E}}
\newcommand{\Prob}{\mathbb{P}}
\title{Information-Credible Stability in Matching with Incomplete Information}
\author{\textsc{Kaibalyapati Mishra}}
\affil{\textsc{CESP, ISEC}, \textsc{Bengaluru, India}}
\date{\today}
\begin{document}
	
	\maketitle
	
	\doublespacing
	
	\begin{abstract}
		
		In this paper, I develop a refinement of stability for matching markets with incomplete information. I introduce\textit{\textbf{ Information-Credible Pairwise Stability (ICPS)}}, a solution concept in which deviating pairs can use credible, costly tests to reveal match-relevant information before deciding whether to block. By leveraging the option value of information, ICPS strictly refines Bayesian stability, rules out \textit{fear-driven matchings}, and connects belief-based and information-based notions of stability.	ICPS collapses to Bayesian stability when testing is uninformative or infeasible and coincides with complete-information stability when testing is perfect and free. I show that any ICPS-blocking deviation strictly increases total expected surplus, ensuring welfare improvement. I also prove that ICPS-stable allocations always exist, promote positive assortative matching, and are unique when test power is sufficiently strong. The framework extends to settings with non-transferable utility, correlated types, and endogenous or sequential testing. 
		
		\textbf{JEL Codes:} C78, D47, D82, D61 \\
		\textbf{Keywords:} Two-sided matching, Stability, Incomplete information, Mechanism design, Information acquisition, Bayesian equilibrium
	\end{abstract}


\pagebreak

\section{INTRODUCTION}

The classical framework of matching, following \citet{gale1962college} and \citet{roth1989two}, rests on the assumption of complete information. In this benchmark, stability is a clear and mechanical consequence of mutual optimization: given full transparency, no (or more) profitable deviation exists. The subsequent shift to \emph{incomplete information} was a necessary step toward realism, acknowledging that agents in labor markets, school choice, and other domains must often act on beliefs rather than facts \citep{bikhchandani2014two, liu2014stable}.

However, this shift introduced a subtle but profound conceptual tension. As \citet{liu2014stable} elegantly demonstrate (or a conservative interpretation would), stability in such settings can emerge not from a lack of mutually beneficial re-matching opportunities, but from agents' \emph{reluctance to deviate in the face of uncertainty}. This \emph{fear-driven matching} arises when the potential downside of a deviation—ending up with a partner even worse than one's current match—outweighs the uncertain upside. The market settles, but into an equilibrium sustained by risk aversion and informational opacity.

This paper is motivated by a simple observation: real-world agents are not passive prisoners of their priors. When contemplating a deviation—a job switch, a new hire, a different school—they do not simply calculate expected values and remain paralyzed by fear. Instead, they actively seek to \emph{reduce} the very uncertainty that constrains them. They conduct interviews, consult networks, run trials, or request disclosures. A stability concept that ignores this fundamental aspect of strategic behavior risks conflating a truly robust outcome with one that is merely an artifact of assumed informational helplessness.

I therefore challenge the notion that stability under incomplete information must inherently be a story of aversion. The \emph{Information-Credible Pairwise Stability} (ICPS) concept proposed here reframes stability as a property of outcomes that can withstand not just the \emph{desire} to deviate, but the \emph{informed capability} to do so. It asks:\textit{ would this match survive if the agents involved were given a credible opportunity to learn about each other?} By incorporating the option value of information directly into the blocking condition, ICPS achieves stability through a process of \emph{mutually improving recontracting}, made possible by bilateral information acquisition, rather than through deterrence born of ignorance.

This perspective bridges a gap between the theoretical purity of the complete-information benchmark and the messy reality of asymmetric information. It suggests that the efficiency of a matching market is not solely determined by its formal rules, but also by the cost and credibility of the information channels available to its participants. A market with cheap, reliable verification mechanisms will naturally converge toward the efficient, complete-information outcome, while a market where information is prohibitively costly or unreliable may indeed be characterized by the fragile, fear-driven stability of the Bayesian benchmark. The ICPS framework provides the language and tools to formally analyze this critical distinction.


Our work builds on and extends several strands of the matching literature. \citet{bikhchandani2014two} introduce ex ante and Bayesian stability in one-sided incomplete information environments, while \citet{liu2014stable} extend these concepts to two-sided settings, establishing how uncertainty alone can sustain stable outcomes. \citet{chen2023theory} develop the notion of ``information stability,'' focusing on the effects of information on outcomes but without integrating credible information acquisition into the blocking mechanism. \citet{alston2020non} highlight the non-existence of stable matchings in certain incomplete information settings.

A key insight of this literature is that incomplete information can generate what \citet{liu2014stable} describe as \emph{fear-driven stability}, where agents rationally refrain from blocking even inefficient allocations because uncertainty about partners’ types makes deviation risky. This mechanism allows stable outcomes to emerge even when mutual gains from deviation exist under full information. \citet{bikhchandani2014two} formalize both ex ante and Bayesian notions of stability and show how belief structures shape the set of stable outcomes. These belief-based concepts are elegant but restrictive: they hinge on priors, which may be unrealistic, outdated, or uninformative in many real-world settings. As a result, stability is determined less by fundamentals and more by informational opacity.

Subsequent work examines how this informational friction shapes the \emph{structure and efficiency} of stable matchings. \citet{peralta2025lone} extend the lone wolf theorem to incomplete information settings with transfers, showing that some structural properties of stable matchings persist despite informational uncertainty. \citet{feng2021matching} explore the link between Bayesian stability and Bayesian efficiency, demonstrating that under specific belief assumptions, stability and efficiency can coincide. Relatedly, \citet{peralta2024not} show that while incomplete information weakens the strong positive assortativity of classical matching, sorting properties can be partially recovered under restrictions on stabilizing sets. These papers highlight that belief-driven stability, though permissive, still embeds some structural regularities.

Another important branch of the literature studies settings with \emph{two-sided incomplete information}, where both firms and workers have private information. \citet{chen2023theory} introduce a theory of information stability, embedding the information structure into the stability definition and showing existence results under two-sided uncertainty. \citet{park2025stable} study two-sided asymmetric information and identify conditions under which assortativity and efficiency can still be achieved. Together, these contributions mark a conceptual shift: instead of purely belief-based blocking, they begin to incorporate informational structure directly into the stability condition.

Our work contributes to this evolving conversation by introducing a new refinement of stability, \emph{Information-Credible Pairwise Stability (ICPS)}, which explicitly embeds credible bilateral information acquisition into the blocking mechanism. Unlike existing belief-based stability concepts, ICPS shifts the focus from \emph{deterrence through uncertainty} to \emph{stability through verifiable opportunity}. By allowing deviating pairs to acquire and act upon credible information, ICPS provides a more realistic foundation for stability in information-rich matching environments.

Our contribution is twofold. \textit{First}, I endogenize information acquisition and incorporate \emph{credible bilateral testing} directly into the definition of blocking, thereby linking incentives to the underlying information structure rather than fixed priors. \textit{Second}, I position ICPS in relation to existing stability notions and show that it (i) strictly refines Bayesian stability, (ii) collapses to Bayesian stability when credible testing is infeasible or uninformative, and (iii) coincides with complete-information stability when testing is perfect and costless. In doing so, ICPS bridges the earlier Bayesian stability literature \citep{liu2014stable,bikhchandani2014two} with more recent work on information structures \citep{chen2023theory}, while also connecting to the welfare and sorting implications emphasized by \citet{peralta2025lone}, \citet{peralta2025lone}, and \citet{feng2021matching}. I build on the classical complete-information benchmark of \citet{roth1989two}, against which both Bayesian and ICPS stability can be rigorously compared.

The rest of the paper is organised as follows. The next section~\ref{sec:env} defines the environment, Section~\ref{sec:mainresults} presents the main results, extensions are discussed in Section~\ref{sec:extensions}, while Section~\ref{sec:con} concludes.

\section{ENVIRONMENT} \label{sec:env}

Let’s begin with the standard environment set-up procedure. I consider a two-sided matching environment with a finite set of firms $F$ and a finite set of workers $L$. Each firm $f \in F$ and worker $\ell \in L$ has a private productivity type $\theta_i \in \{H,L\}$, interpreted as \emph{high} or \emph{low} productivity. Types are drawn independently with $\Pr(\theta_i = H) = p \in (0,1)$.

The total match surplus depends on the pair of realized types of the agents. For any firm–worker pair $(f,\ell)$:
\begin{equation}
	S(\theta_f,\theta_\ell) =
	\begin{cases}
		\alpha & \text{if } (\theta_f,\theta_\ell) = (H,H) \\
		\beta  & \text{if } (\theta_f,\theta_\ell) = (H,L) \text{ or } (L,H) \\
		\gamma & \text{if } (\theta_f,\theta_\ell) = (L,L),
	\end{cases}
\end{equation}
where $\alpha > \beta > \gamma \geq 0$. An unmatched agent obtains a payoff of $0$. The surplus function is strictly increasing in both arguments with the conventional wisdom.

A \emph{matching} is a function $\mu: F \to L \cup \{\emptyset\}$ that assigns each firm to at most one worker (and vice versa), where $\mu(f) = \emptyset$ indicates $f$ is unmatched. The payoff to a firm–worker pair depends on their realized types and any transfers between them.

An \emph{allocation} consists of a matching $\mu$ and a transfer vector $p$, where $p_{f,\ell}$ denotes the transfer between $f$ and $\ell$ if matched. Total surplus under allocation $(\mu,p)$ is
\begin{equation}
	W(\mu,p) = \sum_{f \in F} S(\theta_f, \theta_{\mu(f)}),
\end{equation}
with the convention that $S(\theta_f, \theta_{\emptyset}) = 0$.

Agents observe their own type but not their partner’s. Prior beliefs regarding agent types are common knowledge. Under Bayesian stability, no additional information is available before deviation. Under ICPS, a deviating firm–worker pair $(f,\ell)$ may access a \emph{credible bilateral test} that reveals the true types of both agents with accuracy $\pi \in [0,1]$ at cost $c \geq 0$. This credible information influences the expected gains from deviation. In entry-level labor markets, the corresponding test is analogous to interviews, while in higher education markets, it is typically represented by standardized tests and other screening instruments.

\begin{assumption}[Surplus Monotonicity]\label{ass:mono}
	S(H,H) > S(H,L) = S(L,H) > S(L,L).
\end{assumption}

\begin{assumption}[Independent Types]\label{ass:indep}
	Types are independently distributed across firms and workers.
\end{assumption}

\begin{assumption}[Positive Surplus]\label{ass:pos}
	$S(\theta_f,\theta_\ell) \ge 0$ for all $(\theta_f,\theta_\ell)$.
\end{assumption}

\begin{assumption}[Test Power for Strict Blocking]\label{ass:testpower}
	Let 
	\[
	\Delta \;\equiv\; \min\big\{\, S(x,y)\;-\;\max\{S(x,y'),\,S(x',y)\} \;:\; x>x',\ y>y' \big\}.
	\]
	Under Assumption~\ref{ass:mono} and finite type sets, $\Delta>0$. 
	The feasible test set $\mathcal{T}$ contains some $(\bar\pi,\bar c)$ such that 
	\[
	\bar\pi\,\Delta \;>\; \bar c.
	\]
\end{assumption}

\begin{remark}[Computing $\Delta$ in the H/L parametrization]
	With $S(H,H)=\alpha$, $S(H,L)=S(L,H)=\beta$, $S(L,L)=\gamma$ and $\alpha>\beta>\gamma$, 
	\[
	\Delta \;=\; \min\{\alpha-\beta,\ \beta-\gamma\} \;>\; 0.
	\]
	Assumption~\ref{ass:testpower} then requires a feasible test $(\bar\pi,\bar c)$ with $\bar\pi\min\{\alpha-\beta,\beta-\gamma\}>\bar c$.
\end{remark}

\subsection{ICPS Stability}

\subsubsection{Individual Rationality}

\begin{definition}[Individual Rationality]
	An allocation $(\mu,p)$ is individually rational if
	\begin{equation}
		\pi_f(\mu,p) \geq 0 \quad \text{and} \quad \pi_\ell(\mu,p) \geq 0 \quad \text{for all } f \in F, \ \ell \in L,
	\end{equation}
	where $\pi_f$ and $\pi_\ell$ denote the expected payoffs to the firm and worker, respectively.
\end{definition}

Let $\Sigma^0$ denote the set of all individually rational allocations.

\subsubsection{Bayesian Stability Benchmark}

Under Bayesian stability, no pair has access to additional information prior to deviating. A firm–worker pair $(f,\ell)$ can block $(\mu,p)$ if, given their priors, they can match and redistribute the surplus to make both weakly better off.

The expected surplus from matching a random firm and worker without further information is
\begin{equation}
	\mathbb{E}[S(\theta_f,\theta_\ell)] 
	= p^2 \alpha + 2p(1-p)\beta + (1-p)^2 \gamma.
\end{equation}

\begin{definition}[Bayesian Stability]\label{def:bayesian_stability}
	An allocation $(\mu,p)$ is \emph{Bayesian stable} if it is individually rational and 
	there exists no firm–worker pair $(f,\ell)$ and transfer $t$ such that
	\begin{equation}
		\mathbb{E}\big[S(\theta_f,\theta_\ell)\big] > \pi_f(\mu,p) + \pi_\ell(\mu,p),
	\end{equation}
	where the expectation is taken with respect to the common prior over types \emph{before types are realised}. 
\end{definition}

\begin{remark}[Ex Ante vs Interim Blocking]
	The stability concept adopted here follows the \emph{ex ante} formulation of \cite{liu2014stable}: 
	blocking deviations are evaluated using expected surplus given priors, before any agent observes their own type. 
	This differs from an \emph{interim} formulation, in which each agent observes their own type prior to considering deviation and expectations are conditional. 
	Our focus on ex ante stability aligns with the informational structure of ICPS, in which credible testing opportunities are introduced before type realizations affect blocking incentives.
\end{remark}

\subsubsection{Information-Credible Pairwise Stability}

Under ICPS, deviating pairs can acquire credible information before deciding to deviate. 
This can increase their expected deviation surplus, leading to a refinement of Bayesian stability.

\begin{definition}[Information-Credible Pairwise Stability (ICPS)]\label{def:ICPS}
	An allocation $(\mu,p)$ is \emph{ICPS-stable} if it is individually rational and there exists no firm–worker pair $(f,\ell)$ and feasible test $(\pi,c)$ such that
	\begin{equation}
		\mathbb{E}\big[S(\theta_f,\theta_\ell) \,\big|\, \text{deviation with test}\big] - c 
		> \pi_f(\mu,p) + \pi_\ell(\mu,p).
	\end{equation}
\end{definition}

\noindent
\textit{Clarification.} 
The term $\mathbb{E}[S(\theta_f,\theta_\ell)\mid \text{deviation with test}]$ refers to the 
expected joint surplus conditional on the pair’s testing strategy. 
Intuitively, the pair observes signals generated by the test and consummates the deviation only on favourable realizations, so this expectation is weighted by both signal structure and acceptance behavior. 
The formal decomposition of this term is provided in Lemma~\ref{lemma:surplus_decomposition}.

\begin{remark}[Null test as a feasible action]\label{rem:nulltest}
	Throughout, I allow deviating pairs to \emph{choose not to acquire information}. Formally, the action set contains a degenerate ``null test'' with accuracy $\pi=0$ and cost $c=0$, under which the pair receives no signal and decides to consummate deviation (or not) based on priors only. This is without loss, as any agent can refrain from testing in practice.
\end{remark}

\begin{proposition}[Refinement]\label{prop:refinement}
	For any feasible set of information acquisition technologies $(\pi,c)$:
	\begin{enumerate}
		\item ICPS implies Bayesian stability.
		\item If for all feasible technologies either $c=\infty$ or $\pi=0$, then ICPS coincides with Bayesian stability.
		\item If there exists a feasible technology with $c=0$ and $\pi=1$, then ICPS coincides with complete-information pairwise stability.
	\end{enumerate}
\end{proposition}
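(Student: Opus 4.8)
The plan is to derive all three parts from a single device—the null test of Remark~\ref{rem:nulltest}—together with the observation that, under transferable utility, both blocking notions reduce to a comparison between one \emph{joint} surplus term and the sum $\pi_f(\mu,p)+\pi_\ell(\mu,p)$ of the pair's current expected payoffs, so transfers enter only implicitly and I need only manufacture a blocking pair. Individual rationality is common to Definitions~\ref{def:bayesian_stability} and~\ref{def:ICPS}, so it never obstructs an implication. I argue each inclusion by contrapositive. For Part~1, suppose $(\mu,p)$ fails Bayesian stability; if it is not individually rational it is immediately not ICPS-stable, and otherwise there is a pair with $\mathbb{E}[S(\theta_f,\theta_\ell)]>\pi_f(\mu,p)+\pi_\ell(\mu,p)$. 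Feeding this pair the null test $(\pi,c)=(0,0)$, feasible by Remark~\ref{rem:nulltest}, Lemma~\ref{lemma:surplus_decomposition} collapses the conditional term to the prior expectation, $\mathbb{E}[S(\theta_f,\theta_\ell)\mid\text{deviation with null test}]=\mathbb{E}[S(\theta_f,\theta_\ell)]$, so the ICPS inequality holds with $c=0$ and $(\mu,p)$ is not ICPS-stable.

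For Part~2, Part~1 already supplies one inclusion, so I only need Bayesian stability $\Rightarrow$ ICPS stability under the stated restriction on $\mathcal{T}$. Suppose $(\mu,p)$ were Bayesian stable yet admitted an ICPS-block via some feasible $(\pi,c)$. By hypothesis that test has either $c=\infty$, in which case the left-hand side of the ICPS inequality is $-\infty$ and cannot exceed the finite quantity $\pi_f(\mu,p)+\pi_\ell(\mu,p)$, or $\pi=0$, in which case the test is uninformative and Lemma~\ref{lemma:surplus_decomposition} again yields $\mathbb{E}[S\mid\text{deviation with test}]=\mathbb{E}[S]$; since $c\ge 0$, the block would force $\mathbb{E}[S(\theta_f,\theta_\ell)]\ge\mathbb{E}[S(\theta_f,\theta_\ell)]-c>\pi_f(\mu,p)+\pi_\ell(\mu,p)$, contradicting Bayesian stability. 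Either way no ICPS-block survives, so the two notions coincide.

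For Part~3, I invoke the perfect free test $(\pi,c)=(1,0)$, whose existence is assumed, and here Lemma~\ref{lemma:surplus_decomposition} does the real work: with $\pi=1$ the pair learns the realized profile $(\theta_f,\theta_\ell)$ exactly, and since testing is free the optimal acceptance rule is to consummate the deviation precisely on the states where the realized surplus exceeds the pair's joint continuation payoff. I would then show an ICPS-block with the perfect test exists if and only if there is a positive-probability realization at which $(f,\ell)$ can profitably re-match—exactly a complete-information (Roth) blocking pair. Establishing equivalence reduces to two directions: a complete-information block at some profile is revealed and exploited by the perfect free test, so ICPS-stability forces complete-information stability; and if no profile admits a block, the perfect-test acceptance set is empty, so no ICPS-block can arise.

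I expect Part~3 to be the crux. Parts~1 and~2 are essentially monotonicity observations—the ICPS objective is decreasing in $c$ and, at $\pi=0$, informationally inert, so the null test dominates and the concept cannot cut below Bayesian stability. The genuine difficulty is the conditional expectation in Part~3, where I must reconcile the \emph{ex ante} comparison built into Definition~\ref{def:ICPS} (expected payoffs $\pi_f+\pi_\ell$) with the state-by-state, realized-surplus comparison defining complete-information stability. The surplus-decomposition lemma is what bridges this gap, and the delicate point is verifying that free perfect testing makes the pair's optimal acceptance region coincide exactly with the set of profiles on which a classical block exists, with no distortion introduced by the averaging in $\mathbb{E}[\,\cdot\mid\text{deviation with test}]$.
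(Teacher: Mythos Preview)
Your proposal is correct and follows essentially the same route as the paper: Part~1 via the null test of Remark~\ref{rem:nulltest}, Part~2 by reducing the ICPS inequality to the Bayesian one when $c=\infty$ or $\pi=0$, and Part~3 by computing the optimal acceptance region under the free perfect test and identifying $\mathbb{P}(S>\Pi_{f\ell})>0$ with the existence of a complete-information block. The only cosmetic difference is that you route the $\pi=0$ and $\pi=1$ collapses through Lemma~\ref{lemma:surplus_decomposition}, whereas the paper argues them directly (law of iterated expectations and the explicit formula $\Pi_{f\ell}+\mathbb{E}[(S-\Pi_{f\ell})_+]$); since Lemma~\ref{lemma:surplus_decomposition} appears after Proposition~\ref{prop:refinement} in the paper, you may want to either reorder or inline those two observations, but the content is identical.
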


\begin{proof}[Proof of Proposition \ref{prop:refinement}]
	I prove each part in turn.
	
	\medskip
	\noindent\textbf{(1) ICPS $\Rightarrow$ Bayesian stability.}
	Suppose, toward a contradiction, that $(\mu,p)$ is ICPS-stable but not Bayesian stable. Then there exists a pair $(f,\ell)$ such that
	\[
	\mathbb{E}\big[S(\theta_f,\theta_\ell)\big] \;>\; \pi_f(\mu,p)+\pi_\ell(\mu,p).
	\]
	By Remark~\ref{rem:nulltest}, the pair can implement the null test $(\pi,c)=(0,0)$ and (by definition of ICPS) consider deviation with no additional information. Choosing the trivial acceptance strategy that consummates deviation regardless of the (uninformative) signal yields
	\[
	\mathbb{E}\big[S(\theta_f,\theta_\ell)\,\big|\, \text{deviation with test}\big]-c
	\;=\;\mathbb{E}\big[S(\theta_f,\theta_\ell)\big]-0
	\;>\; \pi_f(\mu,p)+\pi_\ell(\mu,p),
	\]
	which constitutes an ICPS blocking deviation by the pair $(f,\ell)$, contradicting ICPS-stability. Hence, ICPS implies Bayesian stability.
	
	\medskip
	\noindent\textbf{(2) ICPS $=$ Bayesian when testing is uninformative or infeasible.}
	Assume that for every feasible technology either $c=\infty$ or $\pi=0$.
	\begin{itemize}
		\item If $c=\infty$, no test can be undertaken; ICPS permits no information acquisition beyond priors, so the set of blocking deviations coincides with the Bayesian ones by definition.
		\item If $\pi=0$ for all feasible technologies, any test yields no information. For any acceptance strategy, the law of iterated expectations implies
		\[
		\mathbb{E}\big[S(\theta_f,\theta_\ell)\,\big|\, \text{deviation with test}\big]
		\;=\;\mathbb{E}\big[S(\theta_f,\theta_\ell)\big],
		\]
		so (weakly) optimal behavior is to forgo testing (or equivalently, incur zero expected gain from testing). Hence the ICPS blocking condition
		\[
		\mathbb{E}\big[S(\theta_f,\theta_\ell)\,\big|\, \text{deviation with test}\big]-c
		\;>\; \pi_f(\mu,p)+\pi_\ell(\mu,p)
		\]
		reduces to
		\[
		\mathbb{E}\big[S(\theta_f,\theta_\ell)\big]
		\;>\; \pi_f(\mu,p)+\pi_\ell(\mu,p),
		\]
		i.e., exactly the Bayesian blocking condition. Therefore, the two stability notions coincide.
	\end{itemize}
	
	\medskip
	\noindent\textbf{(3) ICPS $=$ complete-information stability when a free perfect test exists.}
	Suppose there exists a feasible technology with $\pi=1$ and $c=0$. Under this test, the deviating pair $(f,\ell)$ learns the realized types $(\theta_f,\theta_\ell)$ with certainty before deciding whether to consummate deviation. Let $V(\theta_f,\theta_\ell)$ denote the \emph{maximal joint surplus from deviation net of transfers} they can realize upon learning $(\theta_f,\theta_\ell)$; since transfers are unrestricted, $V(\theta_f,\theta_\ell)=S(\theta_f,\theta_\ell)$.
	
	Given current allocation $(\mu,p)$, let
	\[
	\Pi_{f\ell} \;\equiv\; \pi_f(\mu,p)+\pi_\ell(\mu,p)
	\]
	be the pair’s joint payoff under $(\mu,p)$ (these can be interpreted either as ex-ante expectations or, in the complete-information benchmark, as type-contingent realizations; see below). With perfect and free information, an optimal acceptance strategy is:
	\[
	\text{Accept deviation iff } S(\theta_f,\theta_\ell) \;\ge\; \Pi_{f\ell}.
	\]
	Therefore, the ex-ante expected joint surplus from ``deviation with test'' equals
	\[
	\mathbb{E}\!\left[\, S(\theta_f,\theta_\ell)\cdot \mathbf{1}\{S(\theta_f,\theta_\ell)\ge \Pi_{f\ell}\} \;+\; \Pi_{f\ell}\cdot \mathbf{1}\{S(\theta_f,\theta_\ell)< \Pi_{f\ell}\} \,\right],
	\]
	which can be written as
	\[
	\Pi_{f\ell} \;+\; \mathbb{E}\!\left[\,\big(S(\theta_f,\theta_\ell)-\Pi_{f\ell}\big)_+ \right],
	\quad\text{where } x_+=\max\{x,0\}.
	\]
	The ICPS blocking inequality with $(\pi,c)=(1,0)$ becomes
	\[
	\Pi_{f\ell} \;+\; \mathbb{E}\!\left[\,\big(S(\theta_f,\theta_\ell)-\Pi_{f\ell}\big)_+ \right]
	\;>\; \Pi_{f\ell},
	\]
	which holds iff
	\[
	\mathbb{P}\big( S(\theta_f,\theta_\ell) \;>\; \Pi_{f\ell} \big) \;>\; 0.
	\]
	Thus, $(\mu,p)$ fails ICPS-stability iff there is a \emph{positive-probability} set of type realizations on which $(f,\ell)$ can block given complete information.
	
	To conclude the equivalence, interpret $\Pi_{f\ell}$ as the pair’s (type-contingent) joint payoff under $(\mu,p)$ when information is complete. Then the condition ``for no pair $(f,\ell)$ does $S(\theta_f,\theta_\ell) > \Pi_{f\ell}$ on a positive-probability set'' is exactly the standard complete-information pairwise stability requirement (no realized-type blocking). Conversely, if complete-information pairwise stability holds, the indicator set $\{S(\theta_f,\theta_\ell)>\Pi_{f\ell}\}$ has probability zero for all pairs, so the expectation of the positive part is zero and no ICPS blocking deviation exists. Hence ICPS coincides with complete-information stability when a free perfect test is feasible.
	
	\medskip
	Combining (1)–(3) yields the proposition. 
\end{proof}

\noindent
Figures~\ref{fig:ipcs3d} and \ref{fig:ipcs} provide a graphical counterpart to 
mProposition~\ref{prop:refinement}. 
Panel~(a) displays the expected deviation surplus under ICPS as a function of the status quo joint payoff $\Pi$ (horizontal axis) and the mean of the surplus distribution (depth axis). 
The surface height corresponds to $\Pi+\mathbb{E}[(S-\Pi)_+]$, illustrating how credible information raises expected payoffs whenever $\mathbb{P}(S>\Pi)>0$. 
Panel~(b) provides the corresponding heatmap with contour lines, highlighting the regions of high option value of information. The figures clearly show that as $\Pi$ increases, the profitable set $\{S>\Pi\}$ shrinks and the expected ICPS deviation payoff converges to $\Pi$, aligning ICPS with Bayesian stability. 
Conversely, when $\Pi$ is low relative to the surplus distribution, the wedge between the baseline and ICPS expected payoff is substantial, reflecting the strengthening of the blocking condition under credible information. 
These patterns visually mirror the logical structure of parts~(2) and~(3) of Proposition~\ref{prop:refinement}.

\begin{figure}[htbp]
	\centering
	\begin{subfigure}[b]{0.48\textwidth}
		\centering
		\includegraphics[width=\textwidth]{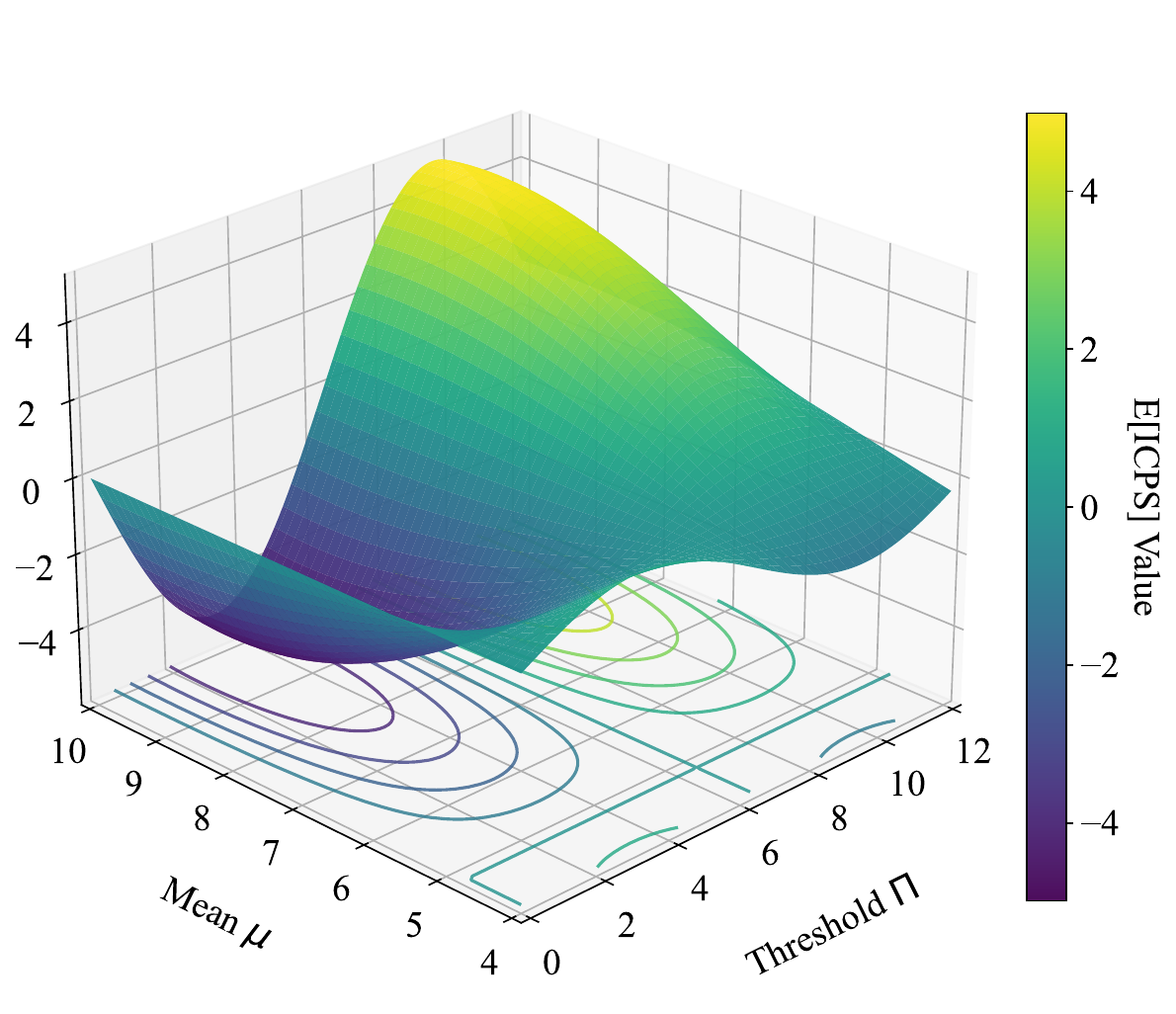}
		\caption{3D Surface with Contour Projection}
		\label{fig:ipcs3d}
	\end{subfigure}
	\hspace{1mm}
	\begin{subfigure}[b]{0.48\textwidth}
		\centering
		\includegraphics[width=\textwidth]{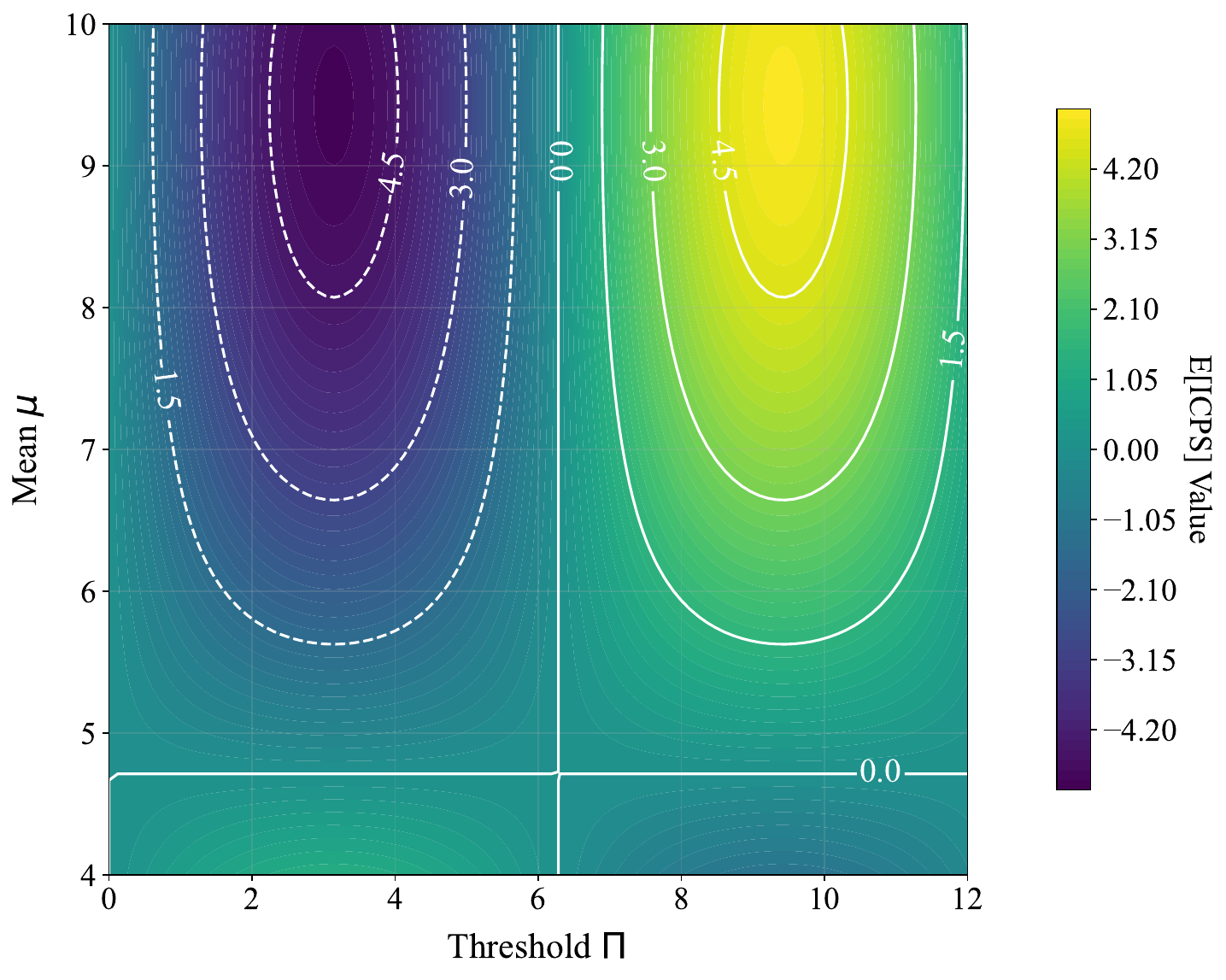}
		\caption{Heatmap with Contour Lines}
		\label{fig:ipcs}
	\end{subfigure}
	\caption{ICPS analysis: (a) 3D surface with contour projection and (b) heatmap with contour lines.}
	\label{fig:icps_combined}
\end{figure}

\section{MAIN RESULTS}\label{sec:mainresults}

This section establishes the central theoretical properties of Information-Credible Pairwise Stability (ICPS). I first show that ICPS refines Bayesian stability. I then characterize how credible information acquisition affects welfare, the structure of stable matchings, and the set of unmatched agents.

\subsection{Refinement and Surplus Decomposition}

\begin{proposition}[Refinement of Bayesian Stability]
	\label{prop:refinement_bayes}
	For any set of feasible information technologies $(\pi,c)$, the set of ICPS-stable allocations is a subset of the set of Bayesian-stable allocations:
	\[
	\Sigma^{\mathrm{ICPS}} \subseteq \Sigma^{\mathrm{Bayes}}.
	\]
\end{proposition}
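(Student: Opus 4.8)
The plan is to recognize that this inclusion is precisely the set-theoretic restatement of part~(1) of Proposition~\ref{prop:refinement}, and to give a short, self-contained argument via the null test. I would fix an arbitrary allocation $(\mu,p)\in\Sigma^{\mathrm{ICPS}}$ and argue by contraposition: if $(\mu,p)$ admitted a Bayesian blocking pair, it would also admit an ICPS blocking deviation, contradicting ICPS-stability. Because both $\Sigma^{\mathrm{ICPS}}$ and $\Sigma^{\mathrm{Bayes}}$ are defined as subsets of the same domain $\Sigma^0$ of individually rational allocations, individual rationality is inherited automatically and only the blocking conditions need to be compared.

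Concretely, suppose toward a contradiction that $(\mu,p)$ is ICPS-stable but fails Bayesian stability. By Definition~\ref{def:bayesian_stability} there is a firm–worker pair $(f,\ell)$ with $\mathbb{E}[S(\theta_f,\theta_\ell)] > \pi_f(\mu,p)+\pi_\ell(\mu,p)$. The crucial structural fact is that the feasible test set is never empty: by Remark~\ref{rem:nulltest} it always contains the degenerate null test $(\pi,c)=(0,0)$, under which no signal is received and the pair simply acts on priors. I would then instantiate the ICPS blocking condition of Definition~\ref{def:ICPS} with this null test and the acceptance rule that consummates the deviation regardless of the (uninformative) signal. Since an uninformative, costless test leaves the prior expectation unchanged, the conditional term collapses to the unconditional one, so the left-hand side of the ICPS inequality equals $\mathbb{E}[S(\theta_f,\theta_\ell)]-0$, which by hypothesis strictly exceeds $\pi_f(\mu,p)+\pi_\ell(\mu,p)$. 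This exhibits an ICPS blocking deviation by $(f,\ell)$, contradicting ICPS-stability of $(\mu,p)$.

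Since $(\mu,p)$ was arbitrary, every ICPS-stable allocation is Bayesian stable, which establishes $\Sigma^{\mathrm{ICPS}}\subseteq\Sigma^{\mathrm{Bayes}}$. I would close by remarking that the inclusion can be strict: whenever a feasible informative test exists, the option value of information strengthens the blocking condition, so some Bayesian-stable (``fear-driven'') allocations are eliminated under ICPS. That strictness, however, is a separate claim and is not needed for the present inclusion.

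As for the main obstacle, there is essentially no analytic difficulty here; the argument is a one-line consequence of the feasibility of the null test together with the monotonicity of the set of blocking deviations. The step requiring the most care is conceptual rather than computational—namely confirming that the feasible test set is never vacuous, so that the ICPS blocking set always contains the Bayesian one. This is exactly what Remark~\ref{rem:nulltest} guarantees, and it is the hinge on which the entire inclusion turns.
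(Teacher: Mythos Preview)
Your argument is correct and follows essentially the same route as the paper: assume an ICPS-stable allocation fails Bayesian stability, invoke the null test $(\pi,c)=(0,0)$ from Remark~\ref{rem:nulltest}, and observe that the resulting ICPS deviation reproduces the Bayesian block, yielding a contradiction. Your write-up is slightly more explicit about the acceptance rule and the role of individual rationality, but the underlying idea is identical.
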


\begin{proof}[Proof of Proposition~\ref{prop:refinement_bayes}](Refinement of Bayesian Stability)
	Let $(\mu,p)$ be ICPS-stable. Suppose, towards a contradiction, that $(\mu,p)$ is not Bayesian-stable. Then there exists a firm--worker pair $(f,\ell)$ and a transfer $t$ such that, under priors (no new information), deviating to match $(f,\ell)$ and splitting the expected surplus makes both weakly better off and at least one strictly better off. 
	
	But such a deviation is a special case of an ICPS deviation using a \emph{degenerate} test with accuracy $\pi=0$ (or, equivalently, choosing not to acquire information). Hence $(f,\ell)$ would also block $(\mu,p)$ under ICPS, contradicting ICPS stability. Therefore $\Sigma^{\mathrm{ICPS}} \subseteq \Sigma^{\mathrm{Bayes}}$. Figure~\ref{fig:venn_refinement} provides a graphical illustration of this inclusion relation, and also depicts the boundary case of equality when $\pi=0$ or $c=\infty$.
	
\end{proof}

\begin{figure}[htbp]
	\centering
	\begin{subfigure}[b]{0.4\textwidth}
		\centering
		\begin{tikzpicture}[scale=1.0]
			\draw[thick] (0,0) ellipse (3.2cm and 2.1cm);
			\node[anchor=west] at (3,1.4) {$\Sigma^{\mathrm{Bayes}}$};
			\draw[thick,fill=black!8] (-0.4,0) ellipse (1.9cm and 1.2cm);
			\node at (-0.4,-1.6) {$\Sigma^{\mathrm{ICPS}}$};
		\end{tikzpicture}
		\caption{\(\Sigma^{\mathrm{ICPS}} \subseteq \Sigma^{\mathrm{Bayes}}\)}
	\end{subfigure}\hfill
	\begin{subfigure}[b]{0.4\textwidth}
		\centering
		\begin{tikzpicture}[scale=1.0]
			\draw[thick,fill=black!8] (0,0) ellipse (3.2cm and 2.1cm);
			\node at (0,0.2) {$\Sigma^{\mathrm{ICPS}} = \Sigma^{\mathrm{Bayes}}$};
			\node at (0,-1.2) {\small (boundary case)};
		\end{tikzpicture}
		\caption{Uninformative/infeasible}
	\end{subfigure}
	\caption{ICPS-stable allocations refine Bayesian-stable ones.}
	\label{fig:venn_refinement}
\end{figure}
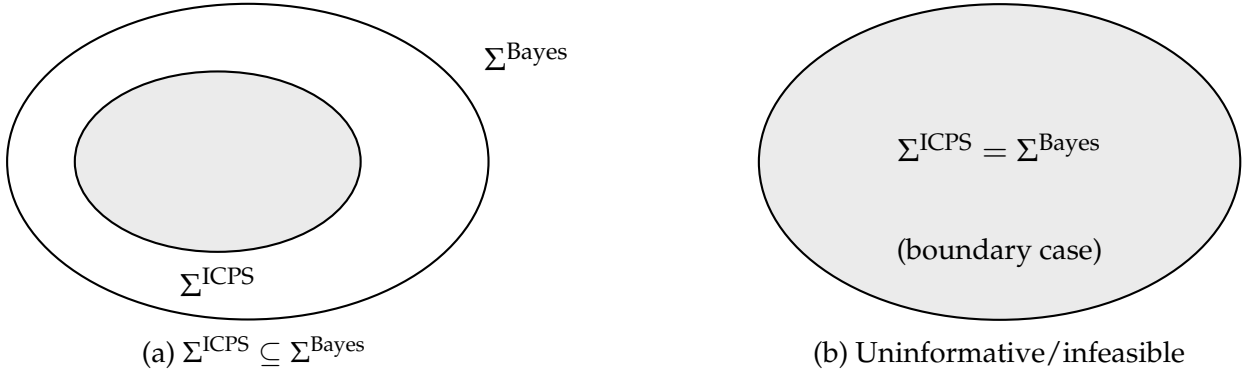

\begin{lemma}[Surplus Decomposition under ICPS]\label{lemma:surplus_decomposition}
	For any firm–worker pair $(f,\ell)$ and any test $(\pi,c)$,
	\[
	\mathbb{E}\!\big[S(\theta_f,\theta_\ell)\mid \text{test}\big]
	= \pi \,\mathbb{E}\!\big[S(\theta_f,\theta_\ell)\mid \text{perfect info}\big]
	+ (1-\pi)\,\mathbb{E}\!\big[S(\theta_f,\theta_\ell)\big].
	\]
	Let $A:=\{\text{the pair accepts/deviates after observing the test outcome}\}$ be the acceptance event
	induced by any (measurable) stopping rule based on the test outcome. Then the pair’s
	\emph{probability-weighted} expected surplus under that rule weakly exceeds the corresponding
	prior-based benchmark:
	\[
	\mathbb{E}\!\big[S(\theta_f,\theta_\ell)\mid A\big]\Pr(A)\;-\;c
	\;\;\ge\;\;
	\mathbb{E}\!\big[S(\theta_f,\theta_\ell)\big]\Pr(A)\;-\;c,
	\]
	with strict inequality whenever some accepted realization yields a posterior expected surplus strictly
	above the prior.
\end{lemma}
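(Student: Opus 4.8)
The plan is to handle the two claims separately, since the first is a structural identity about the test's information content while the second is the option-value inequality that does the real work. For the decomposition, I would model the accuracy-$\pi$ test in the standard Blackwell ``revelation-or-noise'' form: introduce an auxiliary Bernoulli regime $R$, independent of $(\theta_f,\theta_\ell)$, with $\Pr(R=1)=\pi$, where $R=1$ means the test fires and reveals $(\theta_f,\theta_\ell)$ exactly (the perfect-information branch) and $R=0$ means the signal is uninformative so the pair retains the common prior. Conditioning on the test outcome is then conditioning on $(R,\text{signal})$, and the law of total expectation across the two regimes gives
\[
\mathbb{E}\big[S\mid\text{test}\big]=\Pr(R=1)\,\mathbb{E}\big[S\mid R=1\big]+\Pr(R=0)\,\mathbb{E}\big[S\mid R=0\big]=\pi\,\mathbb{E}\big[S\mid\text{perfect info}\big]+(1-\pi)\,\mathbb{E}[S],
\]
using $\mathbb{E}[S\mid R=0]=\mathbb{E}[S]$ because the uninformative branch leaves beliefs at the prior. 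I would stress that the content of this identity lies not in the ex-ante average (which coincides across branches) but in the fact that on $R=1$ the pair can condition its acceptance on the realized types, which is exactly what the second part exploits.

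For the inequality I would pass to posterior means. Writing $m(\sigma)=\mathbb{E}[S\mid\sigma]$ for the posterior expected surplus given the signal $\sigma$, and letting the acceptance event be $A=\{\sigma:\text{the pair deviates}\}$, the tower property gives $\mathbb{E}[S\mid A]=\mathbb{E}[m(\sigma)\mid A]$ and $\mathbb{E}[S]=\mathbb{E}[m(\sigma)]$. After cancelling the common $-c$ and dividing by $\Pr(A)$, the claim reduces to the single inequality $\mathbb{E}[m(\sigma)\mid A]\ge\mathbb{E}[m(\sigma)]$, equivalently $\mathrm{Cov}\big(m(\sigma),\mathbf{1}_A\big)\ge 0$: deviation must be positively associated with a favorable posterior. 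I would establish this for the acceptance rule the pair actually uses, namely the optimal threshold rule ``deviate iff $m(\sigma)\ge\Pi_{f\ell}$'' (the pair consummates only on favourable realizations, per Definition~\ref{def:ICPS}). For such an upper-set event $A=\{m\ge\tau\}$, decomposing $\mathbb{E}[m]$ as the $\Pr(A)$-weighted average of $\mathbb{E}[m\mid A]$ and $\mathbb{E}[m\mid A^{c}]$ and using $\mathbb{E}[m\mid A]\ge\tau\ge\mathbb{E}[m\mid A^{c}]$ yields $\mathbb{E}[m\mid A]\ge\mathbb{E}[m]$ for every threshold; strictness follows precisely when the accepted region carries positive probability of $m(\sigma)>\mathbb{E}[S]$ while $A^{c}$ is non-null, which is the statement's ``some accepted realization yields a posterior strictly above the prior.''

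The main obstacle is that the inequality is genuinely false for an arbitrary measurable acceptance rule: a perverse rule that deviates precisely on low-surplus signals flips the sign of the covariance. The step needing care is therefore pinning down that the relevant rule is monotone in the posterior (an upper set), which is what makes the selection effect nonnegative. I would accordingly interpret $A$ as the optimal threshold event and be explicit that it is the positive association $\mathrm{Cov}\big(m(\sigma),\mathbf{1}_A\big)\ge 0$ — not mere conditioning — that drives the bound. The remaining ingredients (measurability of $m(\sigma)$, and integrability from finiteness of the type space together with Assumption~\ref{ass:pos}) are routine and I would dispatch them in a line.
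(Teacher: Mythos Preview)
Your proposal is correct and follows essentially the same route as the paper: both obtain the decomposition by the law of total expectation over the two test regimes (perfect revelation with probability $\pi$, uninformative otherwise), and both establish the inequality by restricting to acceptance rules that consummate only on favourable posteriors and invoking the tower property to get $\mathbb{E}[S\,\mathbf{1}_A]=\mathbb{E}\big[\mathbb{E}[S\mid I]\,\mathbf{1}_A\big]\ge\mathbb{E}[S]\Pr(A)$. You are more explicit than the paper about why the upper-set (monotone) structure of $A$ is indispensable---the paper simply stipulates a rule that accepts ``only on favourable realizations'' and asserts the bound---but the substance of the argument is the same.
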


\begin{remark}[On “expected surplus” conventions]
	The inequality compares \emph{unconditional} expected payoffs: the conditional expectation 
	$\mathbb{E}[S\mid A]$ is multiplied by $\Pr(A)$ to reflect that the deviation occurs only on $A$.
	Equivalently, one can write $\mathbb{E}[S\,\mathbf{1}_A]-c \ge \mathbb{E}[S]\Pr(A)-c$.
\end{remark}

\begin{proof}[Proof of Lemma~\ref{lemma:surplus_decomposition}]
	Let $I$ denote the information revealed by the test about $(\theta_f,\theta_\ell)$, with
	$\Pr(\text{perfect info})=\pi$ and $\Pr(\text{no extra info})=1-\pi$. By the law of total expectation,
	\[
	\mathbb{E}\!\left[S\mid \text{test}\right]
	= \mathbb{E}\!\left[\mathbb{E}[S\mid I]\right]
	= \pi\,\mathbb{E}\!\left[S\mid \text{perfect info}\right]
	+ (1-\pi)\,\mathbb{E}[S].
	\]
	Let $A$ be the acceptance event implied by any stopping rule that consummates the deviation only on
	favorable realizations of $I$ (e.g., when the posterior exceeds a threshold). Then
	\[
	\mathbb{E}[S\,\mathbf{1}_A]
	= \mathbb{E}\!\left[\mathbb{E}[S\mid I]\,\mathbf{1}_A\right]
	\;\ge\;
	\mathbb{E}[S]\,\Pr(A),
	\]
	with strict inequality when the rule accepts on some realizations with $\mathbb{E}[S\mid I]>\mathbb{E}[S]$.
	Subtracting the (deterministic) test cost $c$ and writing
	$\mathbb{E}[S\,\mathbf{1}_A]=\mathbb{E}[S\mid A]\Pr(A)$ gives
	\[
	\mathbb{E}[S\mid A]\Pr(A)-c \;\ge\; \mathbb{E}[S]\Pr(A)-c,
	\]
	and strict inequality under the stated condition. This proves both the decomposition and the
	probability-weighted weak-improvement claim.
\end{proof}

\subsection{Welfare Comparison}

\begin{proposition}[Welfare Improvement under ICPS]
	\label{prop:welfare_improvement}
	Under Assumptions~1--3, any ICPS-stable allocation $(\mu^{\mathrm{ICPS}},p)$ generates weakly higher expected total surplus than any Bayesian-stable allocation $(\mu^{\mathrm{Bayes}},p)$:
	\[
	\mathbb{E}\big[W(\mu^{\mathrm{ICPS}},p)\big]
	\;\geq\;
	\mathbb{E}\big[W(\mu^{\mathrm{Bayes}},p)\big].
	\]
	The inequality is strict whenever $\pi>0$, $c$ is sufficiently small, and there exists at least one inefficient match (e.g., $(H,L)$ or $(L,L)$) under $\mu^{\mathrm{Bayes}}$.
\end{proposition}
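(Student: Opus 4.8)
The plan is to reduce the welfare ranking to an \emph{efficiency characterization} of ICPS-stable allocations, using Proposition~\ref{prop:refinement} for the inclusion $\Sigma^{\mathrm{ICPS}}\subseteq\Sigma^{\mathrm{Bayes}}$ and Lemma~\ref{lemma:surplus_decomposition} as the engine that converts a credible test into captured option value. The pivotal claim I would establish first is that \emph{every ICPS-stable allocation is surplus-maximizing at realized types}. Granting this, the weak inequality is immediate: any ICPS-stable $\mu^{\mathrm{ICPS}}$ attains the maximal expected surplus, while any Bayesian-stable $\mu^{\mathrm{Bayes}}$ is merely a feasible allocation, so $\mathbb{E}[W(\mu^{\mathrm{Bayes}},p)]\le \max_\mu \mathbb{E}[W(\mu,p)]=\mathbb{E}[W(\mu^{\mathrm{ICPS}},p)]$.

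To prove the pivotal claim I would argue by contradiction. Suppose $\mu^{\mathrm{ICPS}}$ is not surplus-maximizing; then on a positive-probability set of type realizations there is a pair $(f,\ell)$ whose correcting rematch raises their joint realized surplus by at least $\Delta$, the minimal marginal sorting gain of Assumption~\ref{ass:testpower}. Invoking the feasible test $(\bar\pi,\bar c)\in\mathcal{T}$ together with the decomposition in Lemma~\ref{lemma:surplus_decomposition}, the pair's expected net gain from a tested deviation is at least $\bar\pi\,\Delta-\bar c$, which is strictly positive by Assumption~\ref{ass:testpower}. This is an ICPS-blocking deviation, contradicting ICPS-stability. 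Hence no correctable mismatch survives and $\mu^{\mathrm{ICPS}}$ is efficient.

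For the strict inequality, suppose $\mu^{\mathrm{Bayes}}$ contains an inefficient realized match---say $(H,L)$ or $(L,L)$---for which a surplus-improving rematch exists. Fear-driven stability permits this under priors, since the \emph{unconditional} blocking gain can be non-positive, so the mismatch is Bayesian-stable; but the same pair, armed with a test of power $\bar\pi>0$ and cost $\bar c$ small enough that $\bar\pi\Delta>\bar c$, blocks under ICPS, so $\mu^{\mathrm{Bayes}}\notin\Sigma^{\mathrm{ICPS}}$. Consequently $\mu^{\mathrm{Bayes}}$ lies strictly below the efficient frontier, and the surplus gap is bounded below by the probability of the inefficient realization times $\Delta$, giving $\mathbb{E}[W(\mu^{\mathrm{ICPS}},p)]-\mathbb{E}[W(\mu^{\mathrm{Bayes}},p)]>0$.

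The main obstacle is passing from a \emph{local, pairwise} tested deviation to a \emph{global, realized-surplus} ranking, and I expect two points to need care. First, when $(f,\ell)$ rematch, their former partners must be reallocated; I would route this through surplus-preserving swaps and the supermodular (assortative) structure---$\alpha+\gamma\ge 2\beta$ together with Assumption~\ref{ass:mono}---so that each correcting swap weakly raises total surplus, and then appeal to the finite assignment-game fact that, with transfers, the absence of blocking pairs coincides with surplus maximization. Second, the ex-ante symmetry of i.i.d. types makes every matching yield the same prior expected surplus $\mathbb{E}[S]$, so the welfare content (and the very notion of an inefficient match) must be read at the interim/realized-type level; I would therefore evaluate $W$ at realized types and take expectations afterward, which is exactly the level at which Lemma~\ref{lemma:surplus_decomposition} and Assumption~\ref{ass:testpower} deliver the strict gain $\bar\pi\Delta-\bar c>0$. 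Reconciling these---ensuring the test-powered block yields a genuine net-of-cost improvement in realized total surplus---is where the real work lies.
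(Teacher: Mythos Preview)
Your approach differs from the paper's. The paper argues via an \emph{improvement path}: starting from any Bayesian-stable allocation $(\mu^B,p^B)$, it repeatedly executes ICPS-blocking deviations, asserting that each strictly raises total expected surplus (via transferability and Lemma~\ref{lemma:surplus_decomposition}), and uses finiteness of the matching set plus boundedness of $W$ to terminate at an ICPS-stable allocation with weakly higher $W$. You instead aim for a direct \emph{efficiency characterization}: every ICPS-stable allocation is surplus-maximizing, whence the ranking against any Bayesian-stable allocation is immediate. Your route is arguably tighter for the universal claim (``any ICPS-stable vs.\ any Bayesian-stable''), since the paper's path argument literally only constructs \emph{some} ICPS-stable allocation dominating a given Bayesian one.

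There is, however, a genuine gap in your pivotal step. The bound ``the pair's expected net gain from a tested deviation is at least $\bar\pi\Delta-\bar c>0$'' explicitly invokes Assumption~\ref{ass:testpower}, but the proposition is stated under Assumptions~1--3 only. Without a feasible test satisfying $\bar\pi\Delta>\bar c$, ICPS can collapse to Bayesian stability (Proposition~\ref{prop:refinement}(2)), and then ICPS-stable allocations need not be surplus-maximizing, so your efficiency characterization fails as a proof of the weak inequality. The paper sidesteps this by never asserting that ICPS-stable allocations are efficient; it only uses that \emph{if} an ICPS block exists at the current allocation, executing it raises $W$. Your final-paragraph self-diagnosis---the ex-ante symmetry issue (under i.i.d.\ types every matching yields the same prior $\mathbb{E}[S]$) and the local-to-global passage when former partners must be reabsorbed---correctly flags places where both your argument and the paper's are thin, but the reliance on Assumption~\ref{ass:testpower} is the concrete obstruction to matching the stated hypotheses.
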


\begin{proof}[Proof of Proposition~\ref{prop:welfare_improvement}](Welfare Improvement under ICPS)
	Take any Bayesian-stable allocation $(\mu^B,p^B)$. If no ICPS-blocking pair exists, then $(\mu^B,p^B)$ is also ICPS-stable and the result holds trivially. Otherwise, there exists a pair $(f,\ell)$ and a feasible test $(\pi,c)$ such that the posterior-conditional deviation increases the pair’s joint payoff strictly above $\pi_f(\mu^B,p^B)+\pi_\ell(\mu^B,p^B)$ (net of $c$), by Definition of ICPS blocking and Lemma~\ref{lemma:surplus_decomposition}. Because transfers are allowed, I may implement the deviation and redistribute the (strictly) higher expected match surplus between $f$ and $\ell$ without harming others; in particular, the deviation raises \emph{total} expected surplus $W$ strictly (the change in total expected surplus equals the increase in the deviators’ joint expected payoff).
	
	Define an improvement path that repeatedly executes any ICPS-blocking deviation, updating the allocation each time. Each step raises $W$ strictly by at least some $\varepsilon>0$ (bounded away from zero given finiteness of the relevant partitions/types and discrete feasible-test set). Since (i) the set of matchings on finite $F,L$ is finite, and (ii) $W$ is uniformly bounded above by $\sum_{f\in F}\max_{\theta_f,\theta_\ell} S(\theta_f,\theta_\ell)$, the process must terminate in finitely many steps at an allocation $(\mu^{\mathrm{ICPS}},p^{\mathrm{ICPS}})$ with no ICPS-blocking pair, i.e., an ICPS-stable allocation. Summing the strict improvements along the path gives
	\[
	\mathbb{E}\!\left[W(\mu^{\mathrm{ICPS}},p^{\mathrm{ICPS}})\right]
	\;>\;
	\mathbb{E}\!\left[W(\mu^{B},p^{B})\right]
	\]
	whenever an ICPS-blocking pair existed at $(\mu^B,p^B)$, and weak inequality otherwise. This establishes the claim and the strictness condition in the statement.
\end{proof}

\subsubsection{Sorting Structure}

\begin{lemma}[Sorting Property]
	\label{lemma:sorting}
	Under surplus monotonicity, no high-type agent is matched with a low-type agent in any ICPS-stable allocation when testing is perfect and costless. More generally, ICPS-stable matchings are weakly positively assortative.
\end{lemma}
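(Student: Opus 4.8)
The plan is to dispatch the perfect--costless case first and then comment on the interpolation to general $(\pi,c)$. For the first claim I would invoke Proposition~\ref{prop:refinement}(3): whenever a feasible test has $\pi=1$ and $c=0$, ICPS-stability coincides with complete-information pairwise stability. Since the environment admits unrestricted transfers, complete-information pairwise stability in this assignment game is equivalent to the matching maximizing realized total surplus $W=\sum_{f}S(\theta_f,\theta_{\mu(f)})$: a blocking pair $(f,\ell)$ exists precisely when $S(\theta_f,\theta_\ell)$ strictly exceeds their combined current payoffs, so the stable set is the core, and the core is supported exactly by surplus-maximizing assignments. It therefore suffices to show that any surplus-maximizing assignment of \emph{known} types is weakly positively assortative.

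The core step is an exchange (``swap'') argument. Suppose, toward a contradiction, that a stable $\mu$ contains a \emph{crossing}: an $H$-firm $f$ matched to an $L$-worker $\ell$ while an $L$-firm $f'$ is matched to an $H$-worker $\ell'$. Re-matching $f$ with $\ell'$ and $f'$ with $\ell$ changes total surplus by
\[
\big(S(H,H)+S(L,L)\big)-\big(S(H,L)+S(L,H)\big)=\alpha+\gamma-2\beta,
\]
leaving every other pair untouched. If this quantity is positive, the swap strictly raises $W$, contradicting surplus maximization, so no crossing survives and $\mu$ is positively assortative. A separate, easier sub-case covers an $H$-firm matched to an $L$-worker while some $H$-worker $\ell'$ is idle (or the mirror image): the pair $(f,\ell')$ jointly commands $\alpha$, whereas $f$'s current payoff is at most $\beta$ and $\ell'$'s is $0$, so $(f,\ell')$ is a complete-information blocking pair regardless of any curvature condition.

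The hard part — and the point I would flag explicitly — is that the swap is profitable only when the surplus function is \emph{supermodular}, i.e.\ $\alpha+\gamma\ge 2\beta$. Assumption~\ref{ass:mono} delivers only monotonicity ($\alpha>\beta>\gamma$), which does \emph{not} imply supermodularity (e.g.\ $\alpha=3,\beta=2,\gamma=0$ is monotone but submodular, and there the crossing is itself surplus-maximizing, so an $H$--$L$ match is optimal and stable). I would therefore either (i) adopt supermodularity as a maintained hypothesis for this lemma — the standard complementarity condition underlying positive assortative matching, arguably the ``conventional wisdom'' invoked after the surplus definition — under which the swap argument closes cleanly and perfect costless testing yields strict PAM; or (ii) state the monotonicity-only conclusion as the weaker claim that any mis-assortativity resolvable through an available unmatched high-type partner is blocked, which is exactly the easy sub-case above.

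For the ``more generally'' clause with an imperfect or costly test, I would lean on the structure already in place: by Proposition~\ref{prop:refinement_bayes} every ICPS-stable allocation is Bayesian-stable, and by Lemma~\ref{lemma:surplus_decomposition} the testing option value places weight $\pi$ on the perfect-information benchmark, so lowering $\pi$ from $1$ can only weaken, never reverse, the assortative comparison. The delicate issue I expect here is timing: under the paper's \emph{ex ante} formulation all agents are symmetric before types realize, so the sorting statement is most naturally read as a property of the realized, post-test assignment rather than of the ex ante matching $\mu$, and I would phrase the general result accordingly as weak assortativity of the induced type-contingent assignment.
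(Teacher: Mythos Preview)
Your approach differs from the paper's and is, in an important respect, more careful. The paper argues directly: given an $H$--$L$ pair $(f_H,\ell_L)$, it picks any high-type worker $\tilde\ell_H$ and asserts that $(f_H,\tilde\ell_H)$ blocks because $\alpha>\beta$, then handles the residual cases by a symmetric swap. You instead reduce via Proposition~\ref{prop:refinement}(3) to complete-information stability, identify this with surplus maximization in the assignment game, and run an exchange argument on a crossing. Your route exposes a point the paper's proof elides: the paper compares the deviation surplus $\alpha$ only to $f_H$'s current match value $\beta$, but the correct blocking comparison is to the joint payoff $\pi_{f_H}+\pi_{\tilde\ell_H}$, and if $\tilde\ell_H$ is herself matched to an $H$-firm that sum need not be below $\alpha$. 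You are right that Assumption~\ref{ass:mono} alone does not deliver positive assortativity---your $(\alpha,\beta,\gamma)=(3,2,0)$ example is precisely the submodular case in which the crossing is surplus-maximizing and hence complete-information stable, so an $H$--$L$ match survives---and the paper's direct argument implicitly needs the same supermodularity $\alpha+\gamma\ge 2\beta$ that your swap makes explicit. Your two proposed remedies (add supermodularity as a maintained hypothesis, or retreat to the unmatched-high-partner sub-case) are the natural fixes. For the ``more generally'' clause the paper, like you, invokes continuity in $(\pi,c)$; your timing caveat about the ex ante formulation is a fair point that the paper does not take up.
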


\begin{proof}[Proof of Lemma~\ref{lemma:sorting}](Sorting Property)
	Assume $c=0$ and $\pi=1$ (costless, perfect testing). Suppose, for contradiction, that some ICPS-stable allocation $(\mu,p)$ contains an $H$--$L$ match, say $(f_H,\ell_L)$. Consider the other side’s set of partners. If there exists a worker $\tilde\ell_H$ of high type, then pair $(f_H,\tilde\ell_H)$ can acquire (perfect) information at zero cost, verify both are $H$, and deviate to obtain surplus $S(H,H)=\alpha$. Since $S(H,L)=\beta$ and $\alpha>\beta$ by Assumption~1, the deviating pair can split the surplus improvement via transfers, forming a profitable ICPS deviation that contradicts stability.
	
	If instead there is no high-type worker available (so every worker is low), then any firm--worker match is either $H$--$L$ or $L$--$L$. Pick any low-type firm $f_L$ matched with a low-type worker (or unmatched). Then $(f_H,\ell_L)$ yields $\beta$ while $(f_L,\ell_L)$ yields $\gamma$, with $\beta>\gamma$; swapping partners cannot eliminate a profitable pairwise deviation for some $H$--$H$ comparison on the firms’ side unless there are no high types on that side either. Symmetrically, if there exists a high-type firm and a high-type worker across sides, they create a profitable $H$--$H$ deviation at zero cost. Hence in any ICPS-stable allocation under $c=0,\pi=1$, no $H$--$L$ matches can persist whenever $H$ agents exist on both sides. Therefore ICPS-stable allocations are (weakly) positively assortative; the result extends by continuity to sufficiently high $\pi$ and sufficiently small $c$, since a small (net) information friction preserves the strict ranking $\alpha>\beta>\gamma$ in expected terms.
\end{proof}

\subsection{Existence of ICPS-Stable Allocations}

\begin{proposition}[Existence]
	\label{prop:existence}
	Under Assumptions~1--3 and finite agent sets,
	\[
	\Sigma^{\mathrm{ICPS}} \neq \emptyset.
	\]
\end{proposition}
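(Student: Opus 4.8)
The plan is to produce an ICPS-stable allocation by a potential-function argument, reusing the surplus-monotonicity of ICPS blocking established in Proposition~\ref{prop:welfare_improvement}. First I would record that the relevant candidate set is nonempty and the potential is well-behaved: the empty matching $\mu_\emptyset$ (every agent unmatched, all payoffs equal to $0$) lies in $\Sigma^0$ by Assumption~\ref{ass:pos}, so $\Sigma^0\neq\emptyset$. Because $F$ and $L$ are finite, the set of matchings is finite, and the expected total surplus
\[
\mathbb{E}\big[W(\mu)\big]=\sum_{f\in F}\mathbb{E}\big[S(\theta_f,\theta_{\mu(f)})\big]
\]
depends on the matching alone, since transfers are internal and cancel in the aggregate. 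Hence $\mathbb{E}[W(\cdot)]$ attains a maximum over the finite matching set at some $\mu^\ast$.

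Next I would equip $\mu^\ast$ with individually rational transfers. Since every realized match surplus is non-negative (Assumption~\ref{ass:pos}), its expectation is non-negative, so for each matched pair I can select a transfer giving both agents a non-negative expected payoff (for instance an equal split of the non-negative expected match surplus), while unmatched agents receive $0$. The resulting allocation $(\mu^\ast,p^\ast)$ then lies in $\Sigma^0$.

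The crux is to show $(\mu^\ast,p^\ast)$ admits no ICPS-blocking pair. Suppose toward a contradiction that some pair $(f,\ell)$ and feasible test $(\pi,c)$ block. By the surplus-monotonicity argument of Proposition~\ref{prop:welfare_improvement} together with Lemma~\ref{lemma:surplus_decomposition}, executing this deviation—matching $f$ with $\ell$ under their optimal testing and acceptance rule and releasing the displaced partners—yields a matching whose expected total surplus strictly exceeds $\mathbb{E}[W(\mu^\ast)]$. This contradicts the maximality of $\mu^\ast$. Therefore no ICPS-blocking pair exists, $(\mu^\ast,p^\ast)$ is ICPS-stable, and $\Sigma^{\mathrm{ICPS}}\neq\emptyset$.

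The main obstacle is exactly this third step: the ICPS blocking inequality is phrased through the deviators' joint payoff $\pi_f(\mu,p)+\pi_\ell(\mu,p)$, whereas the global change in $\mathbb{E}[W]$ also depends on the surplus of the two matches dissolved when the former partners of $f$ and $\ell$ are released, so the identification of a local block with a strict global surplus gain must be handled carefully. I would address it exactly as in Proposition~\ref{prop:welfare_improvement}, where unrestricted transfers let the blocking pair internalize their strict improvement without harming others, so that the net change in total expected surplus coincides with the deviators' strict gain. A cleaner route that sidesteps the single-maximizer bookkeeping is to iterate the monotone improvement path of Proposition~\ref{prop:welfare_improvement} starting from $(\mu_\emptyset,0)$: each ICPS-blocking step strictly raises $\mathbb{E}[W]$, and since $\mathbb{E}[W]$ is bounded above by $|F|\,\alpha$ and, over finitely many matchings and finitely many relevant type partitions, takes values in a finite set, the strictly increasing sequence must terminate in finitely many steps at an allocation with no ICPS-blocking pair—the desired ICPS-stable allocation—while individual rationality is preserved along the path by the same transfer construction as above.
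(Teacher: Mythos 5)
Your proposal takes essentially the same route as the paper's proof: pick a matching maximizing expected total surplus over the finite set of matchings, endow it with transfers, and derive a contradiction from any ICPS-blocking pair via Lemma~\ref{lemma:surplus_decomposition} and transferable utility. Your explicit construction of individually rational transfers and your flagging of the displaced-partner bookkeeping (which the paper also resolves only by appeal to the welfare-improvement argument) are refinements of, not departures from, the paper's argument.
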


	\begin{proof}[Proof of Proposition~\ref{prop:existence}]
	Let $W(\mu) \equiv \sum_{f \in F} \mathbb{E}\big[S(\theta_f, \theta_{\mu(f)})\big]$ denote total expected surplus. Since $F$ and $L$ are finite, the set of matchings is finite, so there exists $\mu^\star \in \arg\max_{\mu} W(\mu)$. Choose transfers $p^\star$ arbitrarily (TU guarantees existence of supporting prices).
	
	Suppose, for contradiction, that $(\mu^\star, p^\star)$ admits an ICPS-blocking pair. Then for some feasible test $(\pi, c)$ and acceptance rule, the deviating pair's posterior-conditional deviation yields strictly higher joint expected payoff than their status-quo sum. By Lemma~\ref{lemma:surplus_decomposition} and transferable utility, this implies the existence of a matching $\mu'$ with $W(\mu') > W(\mu^\star)$, contradicting the maximality of $\mu^\star$. Hence $(\mu^\star, p^\star)$ is ICPS-stable and $\Sigma^{\mathrm{ICPS}} \neq \emptyset$.
	
	\smallskip
	\noindent\emph{Remark.} If test costs represent resource costs, replace $W$ with $\Phi(\mu) = W(\mu) - \mathbb{E}[\text{total test costs}]$ and pick $\mu^\star \in \arg\max_\mu \Phi(\mu)$; the contradiction argument applies identically.
\end{proof}

\subsection{Invariance of the Unmatched Set}

\begin{proposition}[Lone Wolf Property under ICPS]
	\label{prop:lonewolf}
	If agent $i$ is unmatched in some ICPS-stable allocation $(\mu,p)$, then $i$ is unmatched in all ICPS-stable allocations:
	\[
	\mu(i) = \emptyset \ \text{in some } (\mu,p) \in \Sigma^{\mathrm{ICPS}}
	\ \implies \
	\mu'(i) = \emptyset \ \text{for all } (\mu',p') \in \Sigma^{\mathrm{ICPS}}.
	\]
\end{proposition}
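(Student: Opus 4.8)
The plan is to reduce the statement to a property of \emph{efficient} matchings and then exploit the transferable-utility structure. First I would show that every ICPS-stable allocation maximizes expected total surplus $W(\mu)=\sum_{f}\mathbb{E}[S(\theta_f,\theta_{\mu(f)})]$. One direction is immediate from Proposition~\ref{prop:existence}, whose proof exhibits an expected-surplus maximizer $\mu^\star$ that is ICPS-stable, so the optimal value $W^\star$ is attained inside $\Sigma^{\mathrm{ICPS}}$. For the converse I would argue that an ICPS-stable allocation with $W(\mu)<W^\star$ cannot survive: if $\mu'$ raises total surplus, the symmetric difference $\mu\triangle\mu'$ decomposes into alternating paths and cycles, and on at least one component $\mu'$ yields strictly more surplus. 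Summing the pairwise ICPS no-block inequalities (available via the null test of Remark~\ref{rem:nulltest} and Lemma~\ref{lemma:surplus_decomposition}) along that component would contradict its strictly positive surplus gain, so some realigned pair on it blocks. Hence all ICPS-stable allocations share the common optimal value $W^\star$.

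Second, I would compare two ICPS-stable (hence efficient) allocations $(\mu,p)$ and $(\mu',p')$ directly through $\mu\triangle\mu'$, which again splits into alternating paths and cycles. Every cycle matches exactly the same set of agents under $\mu$ and $\mu'$, so any change in an agent's matched status can occur only at the free endpoint of an alternating path. I would then invoke the dual/complementary-slackness characterization of stability in the TU assignment game: an agent unmatched in an optimal matching must receive zero payoff in any supporting price vector, and I would try to show that the agent occupying the free endpoint of a path under $\mu$ is forced to the free endpoint under $\mu'$ as well, delivering invariance of the unmatched set.

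The hard part will be precisely this endpoint/invariance step, and I expect it to be the main obstacle because it is exactly where the ex-ante symmetry of the environment works against the claim. Under Assumption~\ref{ass:indep} every firm--worker pair carries the \emph{same} prior expected surplus $\bar S=p^2\alpha+2p(1-p)\beta+(1-p)^2\gamma$, so efficiency pins down only the \emph{number} of matched pairs, $\min\{|F|,|L|\}$, and not the \emph{identity} of the unmatched agents; when $|F|\neq|L|$, many efficient matchings leave different agents unmatched, all supported by the same price vector $u\equiv 0,\ v\equiv\bar S$. The bare duality argument therefore cannot close on symmetry alone, and I expect the decisive ingredient to be the test-power condition of Assumption~\ref{ass:testpower} together with the strict ranking of Assumption~\ref{ass:mono}: the option-value wedge $\mathbb{E}[(S-\Pi)_+]$ identified in Proposition~\ref{prop:refinement} differentiates otherwise-identical pairs, and when test power is sufficiently strong it collapses $\Sigma^{\mathrm{ICPS}}$ to a single allocation, at which point the Lone Wolf conclusion holds trivially. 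I would therefore structure the proof so that the invariance claim is carried by this tie-breaking/uniqueness mechanism rather than by efficiency by itself.
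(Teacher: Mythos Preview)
Your route is genuinely different from the paper's, and the difference matters. The paper does \emph{not} pass through efficiency or the path/cycle decomposition at all; it argues directly with the single pair $(i,j)$ where $j=\mu'(i)$ is $i$'s partner in the second stable allocation. Since $\pi_i(\mu,p)=0$, the ICPS no-block inequality at $(\mu,p)$ for the pair $(i,j)$ forces $V_{\text{dev}}(i,j)\le \pi_j(\mu,p)$, where $V_{\text{dev}}(i,j)=\max_{(\pi,c)\in\mathcal T}\{\mathbb{E}[S\mid\text{test}]-c\}$. The paper then compares this with the bound $\pi_i(\mu',p')+\pi_j(\mu',p')\le V_{\text{dev}}(i,j)$ at $(\mu',p')$ to obtain $\pi_j(\mu',p')\le\pi_j(\mu,p)$ and argues that this is incompatible with $(i,j)$ being a match in a stable $(\mu',p')$. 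No alternating paths, no duality, and---crucially---no appeal to Assumption~\ref{ass:testpower}.

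That last point is where your plan loses ground. You correctly diagnose that in the ex-ante symmetric environment every pair carries the same prior surplus $\bar S$, so efficiency alone cannot pin down \emph{which} agents on the long side are unmatched; your proposed remedy is to invoke test power and collapse $\Sigma^{\mathrm{ICPS}}$ to a singleton via Proposition~\ref{prop:uniqueness}. But then the Lone Wolf conclusion becomes tautological, and you have established a strictly weaker statement than the one asserted: the proposition (and the paper's proof) use only Assumptions~\ref{ass:mono}--\ref{ass:pos}, and Proposition~\ref{prop:uniqueness} additionally requires distinct types, which the binary $H/L$ setup does not deliver once there is more than one agent per side. The moral is that the invariance of the unmatched set here is treated as a \emph{local} fact, not a global-efficiency one: the pair $(i,\mu'(i))$ already furnishes the contradiction, because an unmatched $i$ contributes zero to the status-quo sum and so the ICPS blocking inequality for that particular pair bites at $(\mu,p)$ whenever $\pi_j(\mu,p)<V_{\text{dev}}(i,j)$. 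Your efficiency/decomposition machinery is not incorrect as a lemma, but it is solving a harder problem than needed and, as you yourself concede, cannot be closed without importing an assumption the paper does not use.
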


\begin{proof}[Proof of Proposition~\ref{prop:lonewolf}]
	Assume Assumptions~\ref{ass:mono}--\ref{ass:pos} and that types are drawn from finite sets with strict surplus monotonicity. 
	Suppose, for contradiction, there exists an agent $i$ (w.l.o.g., a worker) such that 
	$i$ is unmatched in some ICPS-stable allocation $(\mu,p)$ but matched in another ICPS-stable allocation $(\mu',p')$.
	
	Let $j = \mu'(i)$ denote $i$’s match in $(\mu',p')$. 
	Because $i$ is unmatched in $(\mu,p)$, $\pi_i(\mu,p)=0$. 
	Let $V_{\text{dev}}(i,j)$ denote the maximum joint expected deviation value achievable by $(i,j)$ using any feasible test $(\pi,c)$:
	\[
	V_{\text{dev}}(i,j) \;=\; \max_{(\pi,c)\in \mathcal{T}} \big\{ \mathbb{E}[S(\theta_i,\theta_j)\mid \text{test}] - c \big\}.
	\]
	In $(\mu',p')$, the total surplus generated by the pair $(i,j)$ is bounded above by $V_{\text{dev}}(i,j)$, since $(i,j)$ cannot obtain more expected surplus than what they could get under their best feasible test and transfer arrangement.
	
	ICPS stability of $(\mu',p')$ implies
	\[
	\pi_i(\mu',p') + \pi_j(\mu',p') \;\leq\; V_{\text{dev}}(i,j),
	\]
	because their joint payoff cannot exceed what they could obtain by deviating together.
	
	Now consider $(\mu,p)$. 
	If $\pi_j(\mu,p) < V_{\text{dev}}(i,j)$, then $(i,j)$ can profitably deviate from $(\mu,p)$: 
	$i$ improves from $0$ and $j$ obtains at least the same or higher joint surplus compared to its current match, 
	contradicting ICPS stability of $(\mu,p)$.
	
	If $\pi_j(\mu,p) \geq V_{\text{dev}}(i,j)$, then $j$ must already be receiving at least the maximal surplus obtainable from any match with $i$ under any test. 
	But this implies that $j$ cannot be strictly better off matching with $i$ in $(\mu',p')$, and since $i$ receives at most $0$ in $(\mu,p)$, $(i,j)$ cannot form part of an alternative ICPS-stable allocation with higher joint surplus. 
	This contradicts the assumption that they are matched in $(\mu',p')$.
	
	Therefore, no agent $i$ can be unmatched in one ICPS-stable allocation and matched in another. 
\end{proof}

\subsection{Uniqueness}

\begin{proposition}[Uniqueness of ICPS-Stable Matching]\label{prop:uniqueness}
	Suppose Assumptions~\ref{ass:mono}--\ref{ass:pos} hold, types are distinct (no ties), and 
	Assumption~\ref{ass:testpower} holds. Then the ICPS-stable matching is unique and equals the positively assortative matching.
\end{proposition}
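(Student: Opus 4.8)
The plan is to establish uniqueness in two movements. First, I would show that \emph{every} ICPS-stable matching coincides with the positively assortative matching (PAM); second, I would show the PAM is itself ICPS-stable, so that $\Sigma^{\mathrm{ICPS}}$ is the singleton consisting of the PAM. The first movement is where Assumption~\ref{ass:testpower} does its work, upgrading the \emph{weak} sorting of Lemma~\ref{lemma:sorting} to exact assortativity; the second movement follows from Existence (Proposition~\ref{prop:existence}) once the first is in hand.

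The conceptual heart is to show that, under Assumption~\ref{ass:testpower}, ICPS-stability is equivalent to complete-information pairwise stability, the nontrivial direction being that any complete-information improving rematch is realizable as an ICPS block. Concretely, I would take an ICPS-stable $(\mu,p)$ and suppose it were not the PAM. By Lemma~\ref{lemma:sorting} it is weakly positively assortative, so any departure from the PAM forces the existence of a like-ranked pair $(f,\ell)$ that the PAM matches but $\mu$ does not, and whose joint complete-information surplus exceeds their status-quo joint payoff $\pi_f(\mu,p)+\pi_\ell(\mu,p)$ by at least $\Delta=\min\{\alpha-\beta,\beta-\gamma\}$. Distinctness of types (no ties) is exactly what guarantees this gap is bounded below by $\Delta$ rather than collapsing to zero. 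I would then deploy the feasible test $(\bar\pi,\bar c)$ of Assumption~\ref{ass:testpower}: by the decomposition in Lemma~\ref{lemma:surplus_decomposition}, under the acceptance rule that consummates the deviation only on the correct-identification (favourable) realization, the pair's net expected deviation value exceeds its status quo by at least $\bar\pi\,\Delta-\bar c$, which is strictly positive precisely because $\bar\pi\,\Delta>\bar c$. This is an ICPS-blocking deviation, contradicting stability; hence $\mu$ must be the PAM. The Lone Wolf Property (Proposition~\ref{prop:lonewolf}) fixes the unmatched set identically across stable allocations, so no ambiguity remains on unequal sides.

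For the second movement, Proposition~\ref{prop:existence} guarantees $\Sigma^{\mathrm{ICPS}}\neq\emptyset$; by the first movement any element of it equals the PAM, so the PAM is ICPS-stable, and since every ICPS-stable matching equals the PAM, uniqueness follows. Equivalently, one may package both movements through the lens of part~(3) of Proposition~\ref{prop:refinement}: Assumption~\ref{ass:testpower} makes ICPS blocking coincide with complete-information blocking on a positive-probability set, reducing the claim to the classical uniqueness of the assortative stable matching in a transferable-utility assignment market under strict surplus monotonicity (Assumption~\ref{ass:mono}) and no ties.

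The step I expect to be the main obstacle is the conversion inside the first movement: turning ``$\mu\neq\text{PAM}$'' into a single-pair ICPS block with a \emph{guaranteed} strictly positive net gain. The delicacy is twofold. One must first certify that the complete-information rematch gain is truly bounded below by $\Delta$ rather than by some smaller amount that the test cost $\bar c$ could erode --- this is exactly where the no-ties hypothesis and the definition of $\Delta$ in Assumption~\ref{ass:testpower} are indispensable, since ties would permit competing optimal matchings and a vanishing gap. One must then transport this complete-information gain to the ex-ante, test-weighted net value via Lemma~\ref{lemma:surplus_decomposition}, choosing the acceptance rule so that the pair deviates only on the favourable realization and redistributing the surplus through transfers so that both deviators strictly benefit. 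Getting this bookkeeping right --- and confirming that the option value survives the cost under $(\bar\pi,\bar c)$ --- is what makes the contradiction airtight.
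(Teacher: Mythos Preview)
Your proposal is correct and takes essentially the same approach as the paper: locate a top-ranked pair that the PAM matches but $\mu$ does not, bound the complete-information rematch gain below by $\Delta$, and convert it into an ICPS block via the test $(\bar\pi,\bar c)$ of Assumption~\ref{ass:testpower} so that the net option value $\bar\pi\,\Delta-\bar c>0$ delivers the contradiction. The paper packages this as a single contradiction assuming two distinct ICPS-stable allocations (rather than your two-movement decomposition) and does not route through Lemma~\ref{lemma:sorting} or Proposition~\ref{prop:lonewolf}, but the core blocking computation---$\bar\pi\,S(f^*,\ell^*)+(1-\bar\pi)\,S_{\mathrm{worse}}-\bar c>S_{\mathrm{worse}}\ge \pi_{f^*}+\pi_{\ell^*}$---is identical to what you outline.
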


\begin{proof}[Proof of Proposition~\ref{prop:uniqueness}]
	Suppose $(\mu,p)$ and $(\tilde\mu,\tilde p)$ are distinct ICPS-stable allocations. 
	Let $(f^*,\ell^*)$ be the pair with the highest type product order not matched together in at least one of them; w.l.o.g.\ they are not matched in $(\mu,p)$. 
	Then in $(\mu,p)$ at least one of $f^*$ or $\ell^*$ is matched with a strictly lower-type partner. 
	Let $S_{\mathrm{worse}} \equiv \max\{S(f^*,\ell'),\,S(f',\ell^*)\}$ denote the larger of the two “degraded” surpluses; by definition of $\Delta$,
	\[
	S(f^*,\ell^*) - S_{\mathrm{worse}} \;\ge\; \Delta.
	\]
	Consider the feasible test $(\bar\pi,\bar c)$ from Assumption~\ref{ass:testpower}. 
	If $(f^*,\ell^*)$ deviate and accept only on favorable signals, their expected net value is at least
	\[
	\begin{aligned}
		&\bar\pi\,S(f^*,\ell^*) + (1-\bar\pi)\,S_{\mathrm{worse}} - \bar c \\
		&= S_{\mathrm{worse}} + \bar\pi\big(S(f^*,\ell^*)-S_{\mathrm{worse}}\big) - \bar c \ge S_{\mathrm{worse}} + \bar\pi\,\Delta - \bar c > S_{\mathrm{worse}}.
	\end{aligned}
	\]
	Because transfers are unrestricted, the current joint payoff of $f^*$ and $\ell^*$ under $(\mu,p)$ cannot exceed $S_{\mathrm{worse}}$ (they are split across strictly inferior pairings). 
	Hence $(f^*,\ell^*)$ form an ICPS-blocking pair against $(\mu,p)$, contradicting its ICPS stability. 
	Therefore no such discrepancy exists, and the unique ICPS-stable matching is the positively assortative one.
\end{proof}
\begin{proposition}[Uniqueness under Perfect Information]\label{prop:uniqueness_perfect}
	If $(\pi,c)=(1,0)$ is feasible, Assumptions~\ref{ass:mono}--\ref{ass:pos} hold, and types are distinct, 
	then the (ICPS-)stable matching is unique and equals the positively assortative matching.
\end{proposition}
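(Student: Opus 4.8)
The plan is to obtain this as a corollary of Proposition~\ref{prop:uniqueness}, since a free perfect test is the sharpest instance of the test-power condition. First I would verify Assumption~\ref{ass:testpower} directly: choosing $(\bar\pi,\bar c)=(1,0)$, the required inequality $\bar\pi\,\Delta>\bar c$ reduces to $\Delta>0$, which holds under Assumption~\ref{ass:mono} on finite type sets (exactly as recorded in the statement of Assumption~\ref{ass:testpower}). The remaining hypotheses—Assumptions~\ref{ass:mono}--\ref{ass:pos} together with distinctness of types—coincide with those of Proposition~\ref{prop:uniqueness}. Hence Proposition~\ref{prop:uniqueness} applies without modification and delivers both uniqueness and positive assortativity, which is precisely the claim.

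For a self-contained argument I would instead route through Proposition~\ref{prop:refinement}(3): when $(\pi,c)=(1,0)$ is feasible, ICPS-stability coincides with complete-information pairwise stability, so a pair $(f,\ell)$ blocks exactly when $\mathbb{P}\big(S(\theta_f,\theta_\ell)>\Pi_{f\ell}\big)>0$. I would then reproduce the greedy selection of Proposition~\ref{prop:uniqueness}. Suppose two distinct ICPS-stable allocations exist, and let $(f^{*},\ell^{*})$ be the highest pair in the type-product order that is matched together in one but split in the other. In the allocation where they are split, each is paired with a strictly lower-ranked partner, so, by the same bound used in Proposition~\ref{prop:uniqueness}, their status-quo joint payoff is at most $S_{\mathrm{worse}}$, whereas a perfect test lets them confirm their types and realize $S(f^{*},\ell^{*})\ge S_{\mathrm{worse}}+\Delta$. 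Since transfers are unrestricted, they can divide the surplus gain $\Delta>0$, producing a strict block and a contradiction; peeling matched pairs off the top then forces both allocations to be the positively assortative one.

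The step to get right is the strictness of the deviation, which is exactly what upgrades existence (Proposition~\ref{prop:existence}) to uniqueness rather than a set-valued solution. With $\pi=1$ and $c=0$ there is neither an informational nor a cost wedge, so the ranking $\alpha>\beta>\gamma$ survives intact in the deviation value and the improvement is bounded below by $\Delta>0$ instead of being merely weak; distinctness of types then makes the product order strict, so the top disagreeing pair is unambiguous at each stage and the greedy peeling terminates at a single matching. I would close by noting that, because transfers are unrestricted, the contradiction does not depend on how the status-quo surplus happens to be split across the two inferior pairings.
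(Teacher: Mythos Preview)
Your proposal is correct. The self-contained route you sketch---invoke Proposition~\ref{prop:refinement}(3) to reduce ICPS to complete-information stability, then block any non-assortative allocation by the top disagreeing pair---is exactly the paper's argument, only spelled out in more detail than the paper's two-sentence version. Your primary route, deriving the result as an immediate corollary of Proposition~\ref{prop:uniqueness} by observing that $(\bar\pi,\bar c)=(1,0)$ trivially satisfies Assumption~\ref{ass:testpower}, is a cleaner packaging the paper does not take: it avoids re-running the greedy peel and makes transparent that perfect free testing is simply the extreme point of the test-power regime. The cost is that it leans on Proposition~\ref{prop:uniqueness} as a black box, whereas the paper's direct argument keeps the complete-information intuition front and center; either is fine here.
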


\begin{proof}[Proof of Proposition~\ref{prop:uniqueness_perfect}]
	With perfect, costless information, ICPS coincides with complete-information stability. 
	Under strict monotonicity and distinct types, the assortative matching uniquely maximizes pairwise surplus; any allocation not matching $(f,\ell)$ with their sorted counterpart admits a profitable deviation by that pair (no test cost, perfect revelation), contradicting stability. 
	Thus the stable matching is unique and positively assortative.
\end{proof}

\section{EXTENSIONS}\label{sec:extensions}

This section explores two natural extensions of the ICPS framework. 
First, I examine how the stability concept behaves in the absence of transfers (i.e., in non-transferable utility (NTU) environments).
Second, I relax the independence assumption on types and study the effect of interdependent (correlated) types on ICPS refinement.

\subsection{Absence of Transfers: NTU-ICPS Stability}

So far, the analysis has been conducted in a transferable utility (TU) environment, where any increase in joint surplus can be redistributed between agents through transfers.
I now consider an NTU environment, in which each agent's utility depends directly on the realized match surplus and cannot be altered through side payments.

Let $u_f(\theta_f,\theta_\ell)$ and $u_\ell(\theta_f,\theta_\ell)$ denote the utilities of the firm and worker from matching $(f,\ell)$ when types $(\theta_f,\theta_\ell)$ are realized.
The current expected payoffs under allocation $\mu$ are $\pi_f(\mu)$ and $\pi_\ell(\mu)$.

\begin{definition}[NTU Information-Credible Pairwise Stability]
	An allocation $\mu$ is \emph{NTU-ICPS-stable} if it is individually rational and there exists no pair $(f,\ell)$ and feasible test $(\pi,c)$ such that
	\begin{align}
		\E\big[u_f(\theta_f,\theta_\ell)\mid \text{test}\big] - c_f &> \pi_f(\mu), \\
		\E\big[u_\ell(\theta_f,\theta_\ell)\mid \text{test}\big] - c_\ell &> \pi_\ell(\mu),
	\end{align}
	where $c_f$ and $c_\ell$ are the individual testing costs borne by the firm and worker, respectively.
\end{definition}

\begin{remark}
	In the absence of transfers, a deviation is profitable only if it improves both sides’ expected utilities simultaneously.
	Hence, information acquisition may be individually unprofitable even if it raises joint surplus.
\end{remark}

\begin{proposition}[NTU Refinement]\label{prop:ntu_refinement}
	For any feasible set of information acquisition technologies $(\pi,c)$,
	\[
	\Sigma^{\mathrm{ICPS, NTU}} \;\subseteq\; \Sigma^{\mathrm{Bayes, NTU}}.
	\]
	The refinement is strict only when information acquisition raises both agents’ expected utilities simultaneously.
\end{proposition}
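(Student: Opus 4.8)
The plan is to mirror the TU refinement argument of Proposition~\ref{prop:refinement_bayes}, exploiting the null test as a feasible action, and then to isolate precisely where the NTU structure breaks the clean ``joint-surplus'' logic so as to pin down the strictness clause. For the inclusion $\Sigma^{\mathrm{ICPS,NTU}}\subseteq\Sigma^{\mathrm{Bayes,NTU}}$, I would take an allocation $\mu$ that is NTU-ICPS-stable and suppose, toward a contradiction, that it is \emph{not} Bayesian-NTU-stable. Then some pair $(f,\ell)$ blocks under priors, i.e.\ $\E[u_f(\theta_f,\theta_\ell)] > \pi_f(\mu)$ and $\E[u_\ell(\theta_f,\theta_\ell)] > \pi_\ell(\mu)$ hold simultaneously.

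Next I would invoke Remark~\ref{rem:nulltest}: the degenerate test $(\pi,c)=(0,0)$, with induced individual costs $c_f=c_\ell=0$, is feasible. Under the null test the signal is uninformative, so by the law of iterated expectations (as in the decomposition of Lemma~\ref{lemma:surplus_decomposition} specialized to $\pi=0$) we have $\E[u_f(\theta_f,\theta_\ell)\mid\text{test}]=\E[u_f(\theta_f,\theta_\ell)]$ and likewise for $\ell$. Substituting into the two NTU-ICPS blocking inequalities reduces them \emph{exactly} to the two prior-based inequalities above, so $(f,\ell)$ together with the null test constitute an NTU-ICPS-blocking deviation. This contradicts NTU-ICPS-stability of $\mu$ and establishes the inclusion; equivalently, every Bayesian-NTU blocking pair is a (null-test) NTU-ICPS blocking pair.

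For the strictness clause, I would characterize the set difference $\Sigma^{\mathrm{Bayes,NTU}}\setminus\Sigma^{\mathrm{ICPS,NTU}}$. Suppose $\mu$ lies in this difference. Bayesian-NTU-stability rules out any pair for which \emph{both} prior inequalities hold, while failure of NTU-ICPS-stability supplies a pair $(f,\ell)$ and feasible test $(\pi,c)$ with $\E[u_f\mid\text{test}]-c_f>\pi_f(\mu)$ and $\E[u_\ell\mid\text{test}]-c_\ell>\pi_\ell(\mu)$ holding simultaneously. Since the corresponding prior inequalities cannot both hold, the informative test together with its acceptance rule is exactly what lifts \emph{both} sides' net conditional expected utilities above status quo at once. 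This is precisely the assertion that strict refinement occurs ``only when information acquisition raises both agents' expected utilities simultaneously,'' and I would record it as a biconditional between strictness and the existence of such a bilaterally improving test.

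The main obstacle, and the substantive difference from the TU case, will be the simultaneity requirement. In TU, Lemma~\ref{lemma:surplus_decomposition} shows that any test raising \emph{joint} expected surplus can be converted, via transfers, into a strict improvement for the pair, so the refinement bites whenever the option value of information is positive. Under NTU no such redistribution is available: a test may raise joint surplus, or even one agent's expected utility sharply, yet leave the other below status quo once the individual cost $c_f$ or $c_\ell$ is netted out, in which case no NTU-ICPS block forms and $\mu$ remains stable. The delicate step is therefore to argue that the gains from information, together with their cost incidence $(c_f,c_\ell)$, must accrue to both sides for a block to materialize, so that the ICPS refinement can vanish on allocations where TU-ICPS would strictly refine. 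I would make this precise by showing that a one-sided informational gain never produces a bilateral block, which both justifies the ``only when'' qualifier and clarifies why the NTU refinement is weaker than its TU counterpart.
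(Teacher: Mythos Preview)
Your proposal is correct and follows essentially the same approach as the paper's own proof: a null-test contradiction argument for the inclusion, followed by the observation that strictness requires testing to lift \emph{both} agents' net expected utilities above status quo simultaneously. Your treatment of the strictness clause is somewhat more detailed than the paper's---you explicitly characterize the set difference $\Sigma^{\mathrm{Bayes,NTU}}\setminus\Sigma^{\mathrm{ICPS,NTU}}$ and contrast with the TU case---but the underlying logic is identical.
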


\begin{proof}
	Suppose $\mu$ is NTU-ICPS-stable but not NTU-Bayesian-stable. 
	Then there exists a pair $(f,\ell)$ such that
	\[
	\E\big[u_f(\theta_f,\theta_\ell)\big] > \pi_f(\mu)
	\quad \text{and} \quad
	\E\big[u_\ell(\theta_f,\theta_\ell)\big] > \pi_\ell(\mu),
	\]
	evaluated under priors. 
	By Remark~\ref{rem:nulltest}, the pair can choose the null test $(\pi,c)=(0,0)$ and consummate deviation without information.
	This constitutes an NTU-ICPS blocking deviation, contradicting NTU-ICPS stability. 
	Hence $\Sigma^{\mathrm{ICPS, NTU}} \subseteq \Sigma^{\mathrm{Bayes, NTU}}$.
	
	Strict refinement arises only if credible testing raises \emph{both} sides’ expected utilities, since without transfers the gains must be Pareto-improving for both agents individually. 
\end{proof}

\begin{remark}
	Unlike the TU case, NTU-ICPS does not guarantee welfare improvement relative to Bayesian stability.
	Information may increase joint surplus but fail to induce a deviation because the surplus cannot be redistributed.
\end{remark}

\begin{remark}[Relationship to Matching with Contracts]
	The NTU-ICPS concept can be viewed through the lens of \citet{hatfield2005matching}'s matching-with-contracts framework, 
	where each potential match corresponds to a contract specifying fixed payoffs. 
	ICPS then represents stability with respect to a richer set of contingent contracts enabled by information acquisition.
\end{remark}

\begin{example}[Information Without Blocking in NTU]
	Consider a pair $(f,\ell)$ with current payoffs $(4,4)$. 
	Suppose testing reveals that with probability $0.5$, types are $(H,H)$ yielding utilities $(8,2)$, 
	and with probability $0.5$, types are $(L,L)$ yielding $(2,8)$. 
	The expected post-test utilities are $(5,5)$, increasing joint surplus from $8$ to $10$. 
	However, under NTU, neither agent would agree to test and deviate, 
	as each faces a $50\%$ chance of being worse off. 
	Thus, the allocation remains NTU-ICPS-stable despite potential efficiency gains.
\end{example}

\subsection{Interdependent Types}\label{subsec:correlated}

I now relax Assumption~\ref{ass:indep} (independence of types) and allow for correlation between firm and worker types. 
Let $\rho = \mathrm{corr}(\theta_f,\theta_\ell)\in[-1,1]$.
Then
\[
\Prob(\theta_f,\theta_\ell) \;\neq\; \Prob(\theta_f)\Prob(\theta_\ell),
\]
and credible testing alters \emph{joint} beliefs about both sides’ types.

\begin{assumption}[Correlated Types Structure]\label{ass:correlated}
	Types $(\theta_f,\theta_\ell)$ follow a joint distribution with correlation $\rho$, 
	where $\Prob(\theta_f=H,\theta_\ell=H) = p^2 + \rho p(1-p)$,
	$\Prob(\theta_f=H,\theta_\ell=L) = p(1-p)(1-\rho)$, and so on,
	ensuring marginal probabilities remain $\Prob(\theta_i=H)=p$ for $i\in\{f,\ell\}$.
\end{assumption}

\begin{lemma}[Effect of Correlation on ICPS Refinement]\label{lemma:correlation}
	Let $\rho$ denote the correlation between firm and worker types.
	\begin{enumerate}
		\item If $\rho>0$, the ICPS refinement relative to Bayesian stability vanishes in the limit as $\rho\to 1$:
		\[
		\lim_{\rho \to 1}\Sigma^{\mathrm{ICPS}} = \Sigma^{\mathrm{Bayes}}.
		\]
		\item If $\rho=0$, the model reduces to the baseline independent-type case.
		\item If $\rho<0$ and $(\pi,c)$ is informative and feasible, ICPS strictly refines Bayesian stability:
		\[
		\Sigma^{\mathrm{ICPS}} \subset \Sigma^{\mathrm{Bayes}}.
		\]
	\end{enumerate}
\end{lemma}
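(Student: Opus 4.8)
The plan is to handle the three regimes separately, anchoring each on Lemma~\ref{lemma:surplus_decomposition} and on the inclusion $\Sigma^{\mathrm{ICPS}}\subseteq\Sigma^{\mathrm{Bayes}}$ from Proposition~\ref{prop:refinement_bayes}, whose proof invokes only the null test (Remark~\ref{rem:nulltest}) and never uses independence; hence ``$\subseteq$'' is available for every joint type law, and the work lies in the reverse inclusion (parts~1--2) and in strictness (part~3). Part~2 is then immediate: putting $\rho=0$ in Assumption~\ref{ass:correlated} gives $\Prob(\theta_f,\theta_\ell)=\Prob(\theta_f)\Prob(\theta_\ell)$, so the environment is exactly the independent baseline and Proposition~\ref{prop:refinement_bayes} applies verbatim.

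For part~1 I would express the per-pair ICPS deviation value via Lemma~\ref{lemma:surplus_decomposition} as the prior value plus an option-value wedge $\mathrm{OV}_\rho(\Pi)=\mathbb{E}_\rho[(S-\Pi)_+]-\big(\mathbb{E}_\rho[S]-\Pi\big)_+$, the gain an informative test contributes beyond the Bayesian benchmark for a non-matched pair with status-quo joint payoff $\Pi$. A Bayesian-stable allocation can fail ICPS only through a pair whose $\Pi$ lies in the blocking gap $\mathbb{E}_\rho[S]\le\Pi<\operatorname{ess\,sup}S$ with $\mathrm{OV}_\rho(\Pi)$ exceeding the test cost. As $\rho\to1$ the joint law concentrates on the diagonal $\{(H,H),(L,L)\}$: the off-diagonal mismatch events that generate the fear-driven wedge vanish, and conditional on either agent's type the partner's type becomes deterministic. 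I would then show, using finiteness of $F,L$ and continuity of $\mathbb{E}_\rho[S]$ and of the supporting payoffs in $\rho$, that along any Bayesian-stable family the realized gap collapses and no ICPS-blocking pair survives, delivering $\Sigma^{\mathrm{Bayes}}\subseteq\Sigma^{\mathrm{ICPS}}$ in the limit.

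Part~3 is a strictness claim, so I would exhibit a single allocation in $\Sigma^{\mathrm{Bayes}}\setminus\Sigma^{\mathrm{ICPS}}$. Negative correlation raises $\Prob(\text{off-diagonal})=p(1-p)(1-\rho)$ and spreads the surplus distribution, strictly enlarging $\mathrm{OV}_\rho(\Pi)$ on the relevant range while keeping $\Prob_\rho(S=\alpha)=p^2+\rho p(1-p)>0$ for feasible $\rho>-p/(1-p)$. The plan is to build a small anti-assortative, fear-driven Bayesian-stable allocation (a $2\times2$ market suffices) in which some non-matched pair has realized joint payoff $\Pi$ with $\mathbb{E}_\rho[S]\le\Pi<\alpha$; invoking the feasible test $(\bar\pi,\bar c)$ of Assumption~\ref{ass:testpower} with $\bar\pi\Delta>\bar c$, that pair strictly blocks under ICPS (by Lemma~\ref{lemma:surplus_decomposition}) yet not under priors, giving $\Sigma^{\mathrm{ICPS}}\subsetneq\Sigma^{\mathrm{Bayes}}$.

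The main obstacle is the $\rho\to1$ limit in part~1. The wedge $\mathrm{OV}_\rho(\Pi)$ stays strictly positive for every $\rho<1$ at any $\Pi$ interior to the support gap, so the refinement does not lose strictness merely by shrinking the wedge pointwise; what must vanish is the set of Bayesian-stable allocations whose realized equilibrium payoffs actually land in that gap. Establishing this — that the vanishing off-diagonal mass pushes every Bayesian-stable profile's pairwise payoffs out of the blocking region, rather than only attenuating the wedge — is the delicate step, which I would close by a compactness/continuity argument over the finite matching set together with the degeneration of $S$ to a single-type surplus at $\rho=1$.
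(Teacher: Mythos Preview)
Your proposal is more careful than the paper's own argument, which is brief and heuristic. Parts~(2) and~(3) align: part~(2) is immediate, and for~(3) the paper argues informally that under negative correlation high--high realizations are rare a~priori but detectable by testing, so an informative, cheap test generates ICPS blocks that are absent under priors --- essentially your option-value story, though the paper does not build the explicit $2\times2$ witness allocation you propose.

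The substantive difference is part~(1). The paper offers a one-line, interim-flavored heuristic: as $\rho\to1$, ``observing one's own type provides strong information about the partner's type,'' so the posterior after testing converges to the prior and ``the option value of information vanishes.'' You instead work with the ex~ante wedge $\mathrm{OV}_\rho(\Pi)=\mathbb{E}_\rho[(S-\Pi)_+]-(\mathbb{E}_\rho[S]-\Pi)_+$ and correctly flag that this wedge does \emph{not} vanish pointwise even at $\rho=1$ (where $S$ still has nondegenerate two-point support $\{\alpha,\gamma\}$ with weights $p,1-p$), so the argument must show that Bayesian-stable payoff profiles exit the blocking gap rather than that the gap itself closes. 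This is a genuine subtlety the paper's heuristic elides by implicitly importing interim conditioning into an ex~ante blocking definition; your compactness/continuity plan over the finite matching set is the right shape for a rigorous closure and is more careful than what the paper actually writes, even if the paper's underlying economic intuition (testing becomes redundant once own type pins down partner type) is the driver you would ultimately formalize.
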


\begin{proposition}[Quantitative Refinement under Correlation]\label{prop:quant_refinement}
	The magnitude of ICPS refinement, measured by $|\Sigma^{\mathrm{Bayes}} \setminus \Sigma^{\mathrm{ICPS}}|$, 
	is monotonic in $|\rho|$ when $\rho<0$ and decreasing in $\rho$ when $\rho>0$.
	Specifically, for $\rho<0$,
	\[
	\frac{\partial}{\partial \rho}\Big|\Sigma^{\mathrm{Bayes}} \setminus \Sigma^{\mathrm{ICPS}}\Big| < 0.
	\]
\end{proposition}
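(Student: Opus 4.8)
The plan is to translate the set difference $\Sigma^{\mathrm{Bayes}}\setminus\Sigma^{\mathrm{ICPS}}$ into the language of the \emph{option value of information} and then track how that value responds to $\rho$. Under transferable utility the natural notion of ``size'' is the aggregate, over firm--worker pairs $(f,\ell)$, of the range of status-quo joint payoffs $\Pi_{f\ell}$ that are simultaneously Bayesian-stable (no prior-based block) and ICPS-blockable (an informed block exists). For each pair this range is an interval, so $\big|\Sigma^{\mathrm{Bayes}}\setminus\Sigma^{\mathrm{ICPS}}\big|$ reduces to a sum of per-pair interval lengths $w(\rho)$, and the proposition becomes a statement about $\sum_{(f,\ell)} w'(\rho)$.

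First I would fix a pair and invoke the option-value representation established in the proof of Proposition~\ref{prop:refinement} together with Lemma~\ref{lemma:surplus_decomposition}, writing the ICPS deviation value (net of cost) as $\Pi + \mathbb{E}_\rho[(S-\Pi)_+] - c$. The pair contributes to the gap precisely when $\mathbb{E}_\rho[S]\le\Pi$ yet $\mathbb{E}_\rho[(S-\Pi)_+]>c$. Since $\mathbb{E}_\rho[(S-\Pi)_+]$ is continuous and strictly decreasing in $\Pi$ on $[\mathbb{E}_\rho[S],\alpha)$, the blockable-but-Bayesian-stable set of status-quo payoffs is the interval $[\mathbb{E}_\rho[S],\Pi^*(\rho))$, where
\[
w(\rho)\;=\;\Pi^*(\rho)-\mathbb{E}_\rho[S],\qquad \mathbb{E}_\rho\big[(S-\Pi^*)_+\big]=c.
\]

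Next I would substitute the correlated-type probabilities of Assumption~\ref{ass:correlated}, namely $a(\rho)=p^2+\rho p(1-p)$, $b(\rho)=2p(1-p)(1-\rho)$, and $g(\rho)=(1-p)^2+\rho p(1-p)$, into both $\mathbb{E}_\rho[S]$ and the threshold equation. In the leading regime $\Pi^*\in[\beta,\alpha)$ the threshold is explicit, $\Pi^*(\rho)=\alpha-c/a(\rho)$, and a short differentiation yields
\[
w'(\rho)\;=\;p(1-p)\left[\frac{c}{a(\rho)^2}-(\alpha+\gamma-2\beta)\right].
\]
This is strictly negative whenever the surplus exhibits increasing differences, $\alpha+\gamma>2\beta$, and the test is not too costly---exactly the case in which the mean surplus rises with $\rho$ and lifts the Bayesian floor faster than it lifts the ICPS ceiling. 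Summing the per-pair widths (which share a common sign) delivers $\frac{\partial}{\partial\rho}\big|\Sigma^{\mathrm{Bayes}}\setminus\Sigma^{\mathrm{ICPS}}\big|<0$; reading this for $\rho<0$ gives monotonicity in $|\rho|$, for $\rho>0$ gives the stated decrease, and the boundary value $\Sigma^{\mathrm{ICPS}}\to\Sigma^{\mathrm{Bayes}}$ at $\rho\to1$ is consistent with Lemma~\ref{lemma:correlation}.

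The hard part will be twofold. First, the functional $|\cdot|$ must be given a measure-theoretic meaning for which the set difference is measurable and differentiable in $\rho$; I would anchor it to Lebesgue measure on the polytope of transfer-induced joint payoffs and argue that $\Pi^*(\rho)$ and $\mathbb{E}_\rho[S]$ vary smoothly, so the interval endpoints---and hence the boundary of the gap region---move differentiably. Second, the status-quo payoffs themselves depend on $\rho$ through the equilibrium matching, so I must check that this general-equilibrium channel does not overturn the direct option-value channel; the cleanest route is to hold the matching at the assortative benchmark (justified by Lemma~\ref{lemma:sorting} and Proposition~\ref{prop:uniqueness}) and show the induced payoff adjustment enters with the same sign via an envelope argument. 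I expect the convexity restriction $\alpha+\gamma>2\beta$, which lies beyond Assumption~\ref{ass:mono}, to be the binding technical requirement, and I would state it explicitly rather than leave it implicit.
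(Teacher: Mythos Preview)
Your proposal is substantially more rigorous than the paper's own argument, which consists of a single sentence: ``Monotonicity follows from the fact that the posterior--prior gap grows with the strength of negative correlation.'' The paper offers no measure-theoretic interpretation of $|\Sigma^{\mathrm{Bayes}}\setminus\Sigma^{\mathrm{ICPS}}|$, no explicit computation, and no differentiation; it simply asserts that the option value of information is larger when types are more negatively correlated and smaller when they are more positively correlated.

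Your route---fixing a pair, writing the gap as an interval $[\mathbb{E}_\rho[S],\Pi^*(\rho))$ in status-quo joint payoffs, solving the threshold equation $a(\rho)(\alpha-\Pi^*)=c$ explicitly, and differentiating $w(\rho)$---is a genuine strengthening. It makes precise what the paper leaves heuristic, and in doing so it surfaces a restriction the paper never states: your formula
\[
w'(\rho)\;=\;p(1-p)\Big[\tfrac{c}{a(\rho)^2}-(\alpha+\gamma-2\beta)\Big]
\]
is negative only under increasing differences ($\alpha+\gamma>2\beta$) and sufficiently small $c$. Assumption~\ref{ass:mono} alone ($\alpha>\beta>\gamma$) does not guarantee this, so you are right to flag it as a binding technical requirement that should be stated rather than hidden. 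The paper's proof, being purely verbal, does not confront this point.

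Two caveats on your side. First, your reading of $|\cdot|$ as Lebesgue measure over the polytope of transfer-induced joint payoffs is a reasonable formalization, but it is your choice, not the paper's; the statement is under-specified, and a referee could equally read $|\cdot|$ as set cardinality over a finite grid of allocations, in which case differentiability in $\rho$ is meaningless. You handle this honestly in your ``hard part'' discussion. Second, the option-value expression $\Pi+\mathbb{E}_\rho[(S-\Pi)_+]-c$ you import is the $\pi=1$ case; for general test accuracy the threshold becomes $\Pi^*=\alpha-c/(\pi\,a(\rho))$ (under the paper's binary-revelation test), which preserves the sign analysis but should be written with the $\pi$ factor for consistency with Definition~\ref{def:ICPS}.
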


\begin{proof}
	(i) When $\rho>0$, observing one’s own type provides strong information about the partner’s type. 
	Bayesian priors already internalize this correlation, so the expected posterior after testing converges to the prior as $\rho\to 1$.
	Hence the option value of information vanishes, and the ICPS and Bayesian blocking sets coincide in the limit.
	
	(ii) When $\rho=0$, the types are independent and the original ICPS–Bayes relationship (Proposition~\ref{prop:refinement}) holds.
	
	(iii) When $\rho<0$, high-type firms are ex ante more likely to be matched with low-type workers and vice versa. 
	An informative test allows deviating pairs to condition on rare high–high realizations with high posterior probability but low prior probability. 
	This strictly raises expected deviation payoffs whenever $\pi>0$ and $c$ is sufficiently small.
	Therefore, there exist ICPS-blocking deviations that are not Bayesian-blocking deviations, and the refinement is strict.
	Monotonicity follows from the fact that the posterior–prior gap grows with the strength of negative correlation.
\end{proof}

\begin{remark}
	Correlation between types shifts the information frontier: 
	positive correlation makes priors more informative, reducing the marginal impact of credible testing,
	while negative correlation amplifies the option value of information and strengthens ICPS refinement.
\end{remark}

\begin{remark}[Economic Interpretation of Correlation Effects]
	Positive correlation ($\rho>0$) arises in assortative industries where high-quality firms naturally match with high-quality workers. 
	Negative correlation ($\rho<0$) may arise in compensatory matching environments, 
	where firms seek partners with complementary strengths. 
	ICPS has greater impact in the latter case by revealing unexpected complementarities.
\end{remark}

\begin{corollary}[Existence and Uniqueness Under Correlation]\label{cor:existence_corr}
	Propositions~\ref{prop:existence} and \ref{prop:uniqueness} continue to hold under correlated types, 
	since the refinement structure depends on the informativeness of tests and surplus monotonicity, not on independence per se.
	However, the \emph{magnitude} of the refinement depends on $\rho$.
\end{corollary}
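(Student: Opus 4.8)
The plan is to treat Corollary~\ref{cor:existence_corr} as a robustness statement and prove it by revisiting the arguments behind Proposition~\ref{prop:existence} and Proposition~\ref{prop:uniqueness}, isolating every point at which Assumption~\ref{ass:indep} is actually invoked, and checking that each such step survives when the product law is replaced by the correlated joint law of Assumption~\ref{ass:correlated}. Throughout I would write $\E_\rho[\cdot]$ for expectation under that joint distribution and set $W_\rho(\mu)\equiv\sum_{f\in F}\E_\rho[S(\theta_f,\theta_{\mu(f)})]$. I would adopt the pairwise-exchangeable reading of Assumption~\ref{ass:correlated}, under which every firm--worker pair draws its type profile from the same joint law, so that $W_\rho$ remains additively well defined across matchings and the status-quo joint payoff of any pair is still bounded by the surplus that pair can generate. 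Flagging this reading explicitly is worthwhile, since under correlation one must fix whether $\E_\rho[S]$ is computed pairwise or globally.

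For existence, the proof of Proposition~\ref{prop:existence} rests on four ingredients: (i) finiteness of the matching set, (ii) existence of a surplus maximizer, (iii) transferable utility supporting arbitrary splits, and (iv) the implication, via Lemma~\ref{lemma:surplus_decomposition}, that an ICPS-blocking pair forces a strictly surplus-improving rematching. Ingredients (i)--(iii) depend only on the finiteness of $F$ and $L$ and on TU, never on how $\theta_f$ and $\theta_\ell$ covary, so I would simply pick $\mu^\star\in\arg\max_\mu W_\rho(\mu)$ and repeat the contradiction argument verbatim with $\E$ replaced by $\E_\rho$. The only correlation-sensitive step is (iv), so I would re-examine Lemma~\ref{lemma:surplus_decomposition} and observe that its proof is a pure law-of-total-expectation statement, obtained by conditioning on whether the test reveals types (an event of probability $\pi$), which holds for an arbitrary joint distribution of $(\theta_f,\theta_\ell)$. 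Hence an ICPS-blocking pair at $\mu^\star$ still yields a matching with $W_\rho$ strictly above $W_\rho(\mu^\star)$, contradicting maximality; the resource-cost variant with $\Phi_\rho(\mu)=W_\rho(\mu)-\E_\rho[\text{test costs}]$ goes through identically.

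For uniqueness, I would trace the proof of Proposition~\ref{prop:uniqueness} and note that its machinery is entirely surplus- and test-driven rather than belief-driven. The gap $\Delta$ of Assumption~\ref{ass:testpower}, the degraded surplus $S_{\mathrm{worse}}$, and the inequality $\bar\pi\,\Delta>\bar c$ are all functions of the surplus function $S$ (hence of Assumption~\ref{ass:mono}) and of the test technology alone; none involves the prior probabilities of types, so none is touched by $\rho$. The favorable-signal acceptance bound $\bar\pi\,S(f^*,\ell^*)+(1-\bar\pi)\,S_{\mathrm{worse}}-\bar c>S_{\mathrm{worse}}$ uses $\bar\pi$ purely as the test's revelation probability and evaluates the two branches at deterministic realized surpluses, so it too is independent of the dependence structure. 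With distinct types and surplus monotonicity unchanged, the construction of a blocking pair $(f^*,\ell^*)$ against any second ICPS-stable allocation proceeds exactly as before, forcing the unique ICPS-stable matching to remain positively assortative.

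The step I expect to be the main obstacle is establishing rigorously that the reduction in (iv) -- from the conditional, acceptance-weighted ICPS-blocking inequality to the existence of a deterministic, strictly $W_\rho$-improving rematching -- does not covertly rely on independence. The concern is that under correlation a pair's posterior after testing, and hence its acceptance event $A$, are computed from the joint law, so one must confirm that the chain $\E_\rho[S\,\mathbf{1}_A]=\E_\rho[\E_\rho[S\mid I]\,\mathbf{1}_A]\ge\E_\rho[S]\,\Pr(A)$ underlying Lemma~\ref{lemma:surplus_decomposition} is invariant to the coupling of $\theta_f$ and $\theta_\ell$. I would resolve this by emphasizing that the inequality is a statement about the test's signal $I$ and the stopping rule built on it, not about the marginal-versus-joint status of the types: correlation alters only the numerical value of $\E_\rho[S]$ and of the posteriors $\E_\rho[S\mid I]$ -- that is, the \emph{magnitude} of the option value, exactly as quantified by Lemma~\ref{lemma:correlation} and Proposition~\ref{prop:quant_refinement} -- while leaving intact the qualitative inequalities that drive both existence and uniqueness. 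This also makes precise the corollary's closing sentence: the existence and uniqueness conclusions are invariant in $\rho$, whereas the size of $\Sigma^{\mathrm{Bayes}}\setminus\Sigma^{\mathrm{ICPS}}$ is not.
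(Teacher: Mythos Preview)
Your proposal is correct and follows the same approach as the paper. In fact, the paper provides no separate proof for this corollary at all: the entire justification is the clause embedded in the statement itself (``since the refinement structure depends on the informativeness of tests and surplus monotonicity, not on independence per se''), and your proposal is simply a careful, step-by-step unpacking of that sentence, verifying that the proofs of Propositions~\ref{prop:existence} and~\ref{prop:uniqueness} and Lemma~\ref{lemma:surplus_decomposition} never invoke Assumption~\ref{ass:indep}.
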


\subsection{Endogenous Test Selection}\label{subsec:endogenous}

In the baseline ICPS formulation, deviating pairs have access to a single exogenously given test $(\pi,c)$.
I now extend the framework to allow \emph{endogenous test selection}: 
agents choose which test to use from a menu of available technologies.

Let $\mathcal{T} = \{(\pi_i,c_i)\}_{i=1}^n$ denote a finite menu of information acquisition technologies, 
where each pair $(\pi_i,c_i)$ represents an accuracy–cost bundle. 
I assume that $\pi_i \in [0,1]$ and $c_i \ge 0$ for all $i$, 
and that $\mathcal{T}$ contains the null test $(\pi,c)=(0,0)$.

Given allocation $(\mu,p)$, the expected joint payoff from deviation using technology $i$ is
\[
V_{f\ell}(i) \;=\; \E\big[S(\theta_f,\theta_\ell)\mid \text{test } i\big] - c_i.
\]
The ICPS blocking condition becomes
\begin{equation}\label{eq:endogenous_blocking}
	\max_{i\in \mathcal{T}} V_{f\ell}(i) \;>\; \pi_f(\mu) + \pi_\ell(\mu).
\end{equation}
Thus, deviating pairs endogenously choose the test that maximizes their expected net surplus prior to deciding on deviation.

\begin{remark}
	Endogenous test selection allows agents to exploit the most cost-effective information acquisition technology. 
	Compared to the baseline, this weakly enlarges the set of profitable deviations, leading to an additional refinement of Bayesian stability.
\end{remark}

\begin{proposition}[Refinement with Endogenous Test Selection]\label{prop:endogenous_refinement}
	Let $\mathcal{T}$ be the menu of available information acquisition technologies.
	Then
	\[
	\Sigma^{\mathrm{ICPS, Endog}} \;\subseteq\; \Sigma^{\mathrm{ICPS}} \;\subseteq\; \Sigma^{\mathrm{Bayes}}.
	\]
	If at least one technology $i$ satisfies $\pi_i>0$ and $c_i$ sufficiently small, then the inclusion is strict.
\end{proposition}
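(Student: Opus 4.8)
The plan is to derive both inclusions from a single monotonicity principle: enlarging the set of tests available to a deviating pair can only enlarge the set of profitable deviations, and hence can only shrink the set of stable allocations. The outer inclusion $\Sigma^{\mathrm{ICPS}} \subseteq \Sigma^{\mathrm{Bayes}}$ is already available as Proposition~\ref{prop:refinement_bayes}, so I would simply cite it. The work is therefore concentrated on the first inclusion and on strictness.

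For $\Sigma^{\mathrm{ICPS, Endog}} \subseteq \Sigma^{\mathrm{ICPS}}$, I would argue by contraposition. Fix an allocation $(\mu,p)$ and write $\Pi_{f\ell} \equiv \pi_f(\mu)+\pi_\ell(\mu)$. Suppose $(\mu,p)$ is not baseline-ICPS-stable, so some pair $(f,\ell)$ blocks it using the baseline technology $i_0 \in \mathcal{T}$; that is, $V_{f\ell}(i_0) > \Pi_{f\ell}$. Since $i_0$ is one element of the menu, $\max_{i\in\mathcal{T}} V_{f\ell}(i) \ge V_{f\ell}(i_0) > \Pi_{f\ell}$, so the same pair satisfies the endogenous blocking condition~\eqref{eq:endogenous_blocking}. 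Hence $(\mu,p)$ is not endog-ICPS-stable either, and passing to contrapositives yields the claimed inclusion.

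For strictness under the stated hypothesis (some $i^\star$ with $\pi_{i^\star}>0$ and $c_{i^\star}$ small), I would exhibit an allocation that survives the weaker regime but is blocked once the informative test enters the menu. The natural candidate is a non-assortative, fear-driven Bayesian-stable allocation — one containing an $H$--$L$ or $L$--$L$ pairing when a higher-surplus re-pairing is available — which exists whenever priors alone cannot support the sorted outcome. For such an allocation I would show that the pair $(f^*,\ell^*)$ carrying the degraded match can deviate using $i^\star$, accepting only on favourable signals, to obtain expected net value at least $S_{\mathrm{worse}} + \pi_{i^\star}\Delta - c_{i^\star}$, exactly as in the uniqueness argument (Proposition~\ref{prop:uniqueness}). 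Assumption~\ref{ass:testpower} guarantees $\pi_{i^\star}\Delta - c_{i^\star} > 0$, so this strictly exceeds the pair's status-quo joint payoff, delivering an endogenous blocking deviation that was unavailable under the uninformative baseline. This allocation therefore lies in $\Sigma^{\mathrm{Bayes}} \setminus \Sigma^{\mathrm{ICPS, Endog}}$, making the composite inclusion strict.

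The main obstacle I anticipate is the strictness step, specifically guaranteeing that a non-assortative Bayesian-stable allocation actually exists for the given parameters $(p,\alpha,\beta,\gamma)$: if priors already force the assortative matching, there is nothing for the informative test to break and the chain could collapse to equalities. I would address this by restricting attention to the parameter region in which the ex-ante condition of Definition~\ref{def:bayesian_stability} admits such a fear-driven configuration — the very regime the paper's motivation emphasizes — and otherwise noting that the inclusion holds weakly, consistent with the ``sufficiently small $c_i$'' qualifier in the statement.
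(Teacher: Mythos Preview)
Your inclusion arguments match the paper's exactly: both use the replication/monotonicity idea (the baseline test $(\bar\pi,\bar c)$ is available in $\mathcal{T}$, so any baseline block is also an endogenous block), and both cite the earlier refinement proposition for $\Sigma^{\mathrm{ICPS}}\subseteq\Sigma^{\mathrm{Bayes}}$.

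Where you diverge is on strictness. The paper targets the \emph{first} inclusion $\Sigma^{\mathrm{ICPS,Endog}}\subsetneq\Sigma^{\mathrm{ICPS}}$ and argues (rather loosely) that if some technology in $\mathcal{T}$ strictly dominates the baseline test, then there are pairs that block under endogenous selection but not under baseline. You instead target the \emph{composite} inclusion $\Sigma^{\mathrm{ICPS,Endog}}\subsetneq\Sigma^{\mathrm{Bayes}}$ by exhibiting a fear-driven Bayesian-stable allocation and breaking it via the $\pi_{i^\star}\Delta-c_{i^\star}>0$ bound from the uniqueness argument. Your route is more constructive and makes the mechanism explicit, but it (i) proves a slightly different strictness claim than the paper seems to intend, and (ii) inherits the existence caveat you correctly flag---if priors already force assortativity, there is no such allocation to block. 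The paper's route avoids that caveat but is itself only a sketch: it never actually exhibits an allocation that is baseline-ICPS-stable yet endog-blocked, so neither argument is fully rigorous on strictness. If you want to align with the paper, it suffices to note that whenever the menu contains a test strictly better than the baseline $(\bar\pi,\bar c)$, the endogenous blocking threshold strictly exceeds the baseline one, so any allocation sitting exactly at the baseline margin is eliminated.
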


\begin{proof}
	For any pair $(f,\ell)$, the baseline ICPS blocking condition corresponds to using a single exogenous test $(\bar\pi,\bar c)$.
	Under endogenous selection, the pair can always replicate this choice by selecting the same $(\bar\pi,\bar c)$ from $\mathcal{T}$.
	Therefore, $\Sigma^{\mathrm{ICPS, Endog}} \subseteq \Sigma^{\mathrm{ICPS}}$.
	
	If the menu $\mathcal{T}$ contains a test $(\pi_i,c_i)$ such that
	\[
	\E[S\mid \text{test } i] - c_i > \E[S\mid \text{test } \bar{\pi},\bar{c}],
	\]
	then some pairs will block under endogenous selection but not under the baseline.
	Hence the refinement is strict whenever there exists at least one strictly more informative or cheaper technology.
	
	The second inclusion $\Sigma^{\mathrm{ICPS}}\subseteq\Sigma^{\mathrm{Bayes}}$ follows from Proposition~\ref{prop:refinement}.
\end{proof}

\begin{remark}[Welfare Implications]
	Endogenous selection enhances \emph{efficiency at the margin}: 
	deviating pairs optimally choose the most beneficial test,
	raising expected joint surplus relative to exogenous testing. 
	However, equilibrium allocations may not achieve first-best welfare if testing costs are heterogeneous or if only a subset of pairs has access to high-quality technologies.
\end{remark}

\begin{remark}[Strategic Heterogeneity]
	If the menu $\mathcal{T}$ differs across agents or if test costs are agent-specific, 
	test choice itself becomes a strategic decision that interacts with matching patterns.
	For example, high-type firms may invest in more accurate (costly) tests than low-type firms, 
	leading to endogenous information asymmetries and stratification in the matching market.
\end{remark}

\subsection{Sequential Testing and Real Options}\label{subsec:sequential}

I now allow firms and workers to acquire information \emph{sequentially} rather than in a single, lump-sum test. 
Sequential testing captures situations where information is obtained gradually—e.g., through interviews, trial contracts, or staged screening—so that agents can decide to \emph{abort} the deviation after receiving partial signals.

Let $\mathcal{T}^{\mathrm{seq}}$ denote a sequence of $K$ test stages:
\[
\mathcal{T}^{\mathrm{seq}} = \big\{ (\pi_1,c_1), (\pi_2,c_2), \dots, (\pi_K,c_K) \big\},
\]
where $0 < \pi_1 < \pi_2 < \dots < \pi_K \le 1$ and costs satisfy $0 < c_1 < c_2 < \dots < c_K$.
After each stage $k$, the pair observes a signal with accuracy $\pi_k$ and decides whether to:
\begin{enumerate}
	\item continue testing (paying additional cost $c_{k+1}$),
	\item abort deviation and stay with the current match,
	\item or consummate deviation given current information.
\end{enumerate}

The expected deviation value under sequential testing is
\[
V_{f\ell}^{\mathrm{seq}}
\;=\;
\max_{\tau \in \mathcal{S}}
\E\big[ S(\theta_f,\theta_\ell) \cdot \mathbf{1}_{\{\text{accept at stopping time }\tau\}} - C(\tau) \big],
\]
where $\tau$ is a stopping rule in the set $\mathcal{S}$ of feasible stopping strategies, 
and $C(\tau)$ is the cumulative testing cost incurred up to $\tau$.

\begin{remark}[Real Option Value]
	Sequential testing creates an \emph{option value of waiting}:
	agents can condition their decision on early signals, incurring further costs only if signals are favorable. 
	This weakly dominates one-shot testing in expected value terms.
\end{remark}

\begin{proposition}[Refinement with Sequential Testing]\label{prop:seq_refinement}
	Let $\Sigma^{\mathrm{ICPS, Seq}}$ denote the set of ICPS-stable allocations under sequential testing. Then
	\[
	\Sigma^{\mathrm{ICPS, Seq}} \;\subseteq\; \Sigma^{\mathrm{ICPS, Endog}} \;\subseteq\; \Sigma^{\mathrm{ICPS}} \;\subseteq\; \Sigma^{\mathrm{Bayes}}.
	\]
	The inclusion is strict whenever sequential testing generates positive option value, i.e.,
	\[
	V_{f\ell}^{\mathrm{seq}} > \max_{i\in\mathcal{T}} V_{f\ell}(i)
	\]
	for some pair $(f,\ell)$.
\end{proposition}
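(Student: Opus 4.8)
The plan is to treat all four set relations as instances of a single monotone--comparative--statics principle: enlarging the menu of feasible deviation plans available to a candidate blocking pair can only enlarge the set of profitable deviations, and hence can only shrink (weakly) the set of stable allocations. Two of the three inclusions are already in hand---$\Sigma^{\mathrm{ICPS}}\subseteq\Sigma^{\mathrm{Bayes}}$ from Proposition~\ref{prop:refinement} and $\Sigma^{\mathrm{ICPS, Endog}}\subseteq\Sigma^{\mathrm{ICPS}}$ from Proposition~\ref{prop:endogenous_refinement}---so the only new link to establish is $\Sigma^{\mathrm{ICPS, Seq}}\subseteq\Sigma^{\mathrm{ICPS, Endog}}$, after which the whole chain closes by transitivity.

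First I would prove the value dominance $V_{f\ell}^{\mathrm{seq}}\ge\max_{i\in\mathcal{T}}V_{f\ell}(i)$ for every pair $(f,\ell)$. The key observation is that each one-shot endogenous deviation---``run test $i$ once, then accept or abort''---is a \emph{degenerate, non-adaptive} member of the stopping-rule set $\mathcal{S}$: it performs a single informative stage and stops. Since $V_{f\ell}^{\mathrm{seq}}$ maximizes expected net surplus over the strictly richer class $\mathcal{S}$ of adaptive stopping rules, it weakly exceeds the maximum over this degenerate subclass, which is exactly $\max_i V_{f\ell}(i)$; this is the formal content of the Real Option Value remark. Consequently, if a pair blocks under endogenous selection, so that $\max_i V_{f\ell}(i)>\pi_f(\mu)+\pi_\ell(\mu)$, then $V_{f\ell}^{\mathrm{seq}}>\pi_f(\mu)+\pi_\ell(\mu)$ as well, so the pair also blocks sequentially. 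Contrapositively, any sequentially stable allocation admits no endogenous block, giving $\Sigma^{\mathrm{ICPS, Seq}}\subseteq\Sigma^{\mathrm{ICPS, Endog}}$.

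For the strict-inclusion claim I would argue constructively. Suppose some pair $(f,\ell)$ has a strict option-value gap $\max_i V_{f\ell}(i)<V_{f\ell}^{\mathrm{seq}}$. Using transferable utility, I would exhibit an allocation $(\mu,p)$ whose transfers place this pair's status-quo joint payoff $\Pi_{f\ell}=\pi_f(\mu)+\pi_\ell(\mu)$ strictly inside the gap,
\[
\max_i V_{f\ell}(i)\;\le\;\Pi_{f\ell}\;<\;V_{f\ell}^{\mathrm{seq}},
\]
while assigning every other pair a joint payoff weakly above its own best sequential deviation value. By construction $(f,\ell)$ cannot block endogenously but can block sequentially, and no other pair blocks under either concept, so $(\mu,p)\in\Sigma^{\mathrm{ICPS, Endog}}\setminus\Sigma^{\mathrm{ICPS, Seq}}$, witnessing strictness.

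The main obstacle is the strictness construction rather than the inclusions. Two points require care. First, the dominance inequality tacitly requires that a single informative stage be realizable as a degenerate stopping rule despite the cumulative cost schedule $C(\tau)$; I would either align the endogenous menu with the per-stage tests or invoke the convention that the null stage is free, so that stopping after one informative test reproduces the one-shot value without extra cost. Second, and more delicately, placing $(f,\ell)$ in the gap must not inadvertently push a neighboring pair below its own sequential deviation value, so ensuring global consistency and individual rationality of the constructed allocation is where the real work lies. Here I would lean on the flexibility of TU transfers together with the assortative structure of stable matchings (Lemma~\ref{lemma:sorting}) to absorb the required payoff adjustment within the target pair, keeping all other participation and no-block constraints slack.
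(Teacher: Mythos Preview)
Your approach is essentially the same as the paper's: derive $\Sigma^{\mathrm{ICPS, Seq}}\subseteq\Sigma^{\mathrm{ICPS, Endog}}$ from the observation that any one-shot test is a degenerate stopping rule (so $V_{f\ell}^{\mathrm{seq}}\ge\max_i V_{f\ell}(i)$), and invoke Propositions~\ref{prop:endogenous_refinement} and~\ref{prop:refinement} for the rest of the chain. On strictness you are actually more careful than the paper, which simply asserts that strict inclusion holds when sequential testing raises expected deviation surplus without exhibiting a witnessing allocation; your gap construction and the two obstacles you flag (aligning the per-stage cost schedule with the endogenous menu, and preserving all other no-block constraints when placing $\Pi_{f\ell}$ in the gap) are genuine issues the paper's proof leaves implicit.
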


\begin{proof}
	Under sequential testing, agents can always replicate the outcome of any one-shot test by choosing an appropriate stopping rule $\tau$ at the corresponding stage.
	Hence $\Sigma^{\mathrm{ICPS, Seq}} \subseteq \Sigma^{\mathrm{ICPS, Endog}}$.
	Strict inclusion holds whenever sequential testing allows agents to screen out unfavorable states early and avoid full testing costs, thereby increasing expected deviation surplus.
	
	The remaining inclusions follow from Proposition~\ref{prop:endogenous_refinement} and Proposition~\ref{prop:refinement}.
\end{proof}

\begin{remark}[Efficiency and Selection Effects]
	Sequential testing refines the set of stable allocations by making deviation profitable for some pairs that could not profitably deviate under one-shot testing.
	This may induce better assortative matching and higher expected welfare.
	However, it can also generate \emph{information-induced sorting}: 
	agents with higher type priors are more likely to find it optimal to continue testing,
	leading to endogenous stratification in test intensity and match quality.
\end{remark}

\begin{remark}[Connection to Option Pricing Logic]
	This extension is conceptually related to \citet{dixit1994investment}'s real options framework:
	sequential testing embeds an \emph{irreversible cost structure} with the option to abandon.
	This makes the ICPS refinement more robust to uncertainty and allows agents to flexibly adjust their information acquisition strategy ex post.
\end{remark}

\begin{corollary}[Existence under Endogenous and Sequential Testing]
	Under Assumptions~\ref{ass:mono}--\ref{ass:pos} and finite agent sets,
	\[
	\Sigma^{\mathrm{ICPS, Seq}} \neq \emptyset.
	\]
	Moreover, the unique ICPS-stable matching under perfect and costless sequential testing remains the positively assortative matching, as in Proposition~\ref{prop:uniqueness_perfect}.
\end{corollary}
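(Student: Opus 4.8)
The plan is to prove the two assertions separately, in each case reducing to a result already established earlier. For existence I would mirror the surplus-maximization argument of Proposition~\ref{prop:existence} rather than exploit the nesting chain of Proposition~\ref{prop:seq_refinement}: that chain shows $\Sigma^{\mathrm{ICPS,Seq}}$ is the \emph{smallest} set in the hierarchy, which bounds it from above and is therefore useless for nonemptiness. Instead I would define the net objective $\Phi(\mu) = W(\mu) - \E[\text{total cumulative test costs}]$ and, using finiteness of $F$ and $L$, pick a maximizer $\mu^\star \in \arg\max_\mu \Phi(\mu)$ with arbitrary supporting transfers $p^\star$ (TU guarantees these exist).

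I would then argue by contradiction that $(\mu^\star,p^\star)$ admits no ICPS-Seq blocking pair. Suppose some pair $(f,\ell)$ blocks via an optimal stopping rule $\tau$, so that their sequential deviation value $V_{f\ell}^{\mathrm{seq}}$ strictly exceeds $\pi_f(\mu^\star,p^\star)+\pi_\ell(\mu^\star,p^\star)$. Because utility is transferable and the cumulative cost $C(\tau)$ is a genuine resource cost, $V_{f\ell}^{\mathrm{seq}}$ is precisely the net joint surplus the pair can generate by matching and testing; hence consummating this deviation and redistributing yields a feasible allocation $\mu'$ with $\Phi(\mu') > \Phi(\mu^\star)$, contradicting maximality. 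This is the same logical skeleton as Proposition~\ref{prop:existence}, now applied to the sequential blocking notion, so $\Sigma^{\mathrm{ICPS,Seq}}\neq\emptyset$.

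The step requiring the most care, and the main obstacle, is confirming that the sequential deviation value genuinely behaves like a net surplus that powers the contradiction. Unlike the one-shot case, $V_{f\ell}^{\mathrm{seq}}$ is the value of an optimal-stopping program over the $K$ stages. I would note that with finite stages, finite type sets, and a bounded surplus function, this program is solved by backward induction and attains a finite optimum, so $V_{f\ell}^{\mathrm{seq}}$ is well defined; and since each stage cost $c_k$ enters $\Phi$ additively through $C(\tau)$, the inequality $V_{f\ell}^{\mathrm{seq}} > \pi_f+\pi_\ell$ translates exactly into a strict increase in $\Phi$ upon reallocation. The real-option value highlighted in the sequential-testing discussion only raises $V_{f\ell}^{\mathrm{seq}}$ relative to one-shot testing, which sharpens but does not disturb the contradiction.

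Finally, for the uniqueness claim I would invoke Proposition~\ref{prop:uniqueness_perfect} directly. Under perfect and costless information, a sequential menu containing (or terminating in) a perfectly revealing zero-cost stage degenerates to one-shot perfect free testing: the optimal stopping rule simply acquires full information immediately, so $\Sigma^{\mathrm{ICPS,Seq}}$ coincides with complete-information pairwise stability exactly as in part~(3) of Proposition~\ref{prop:refinement}. With strict surplus monotonicity (Assumption~\ref{ass:mono}) and distinct types, Proposition~\ref{prop:uniqueness_perfect} then yields that the unique stable matching is the positively assortative one, completing the corollary.
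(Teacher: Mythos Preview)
The paper states this corollary without proof, so there is no explicit argument to compare against; the implicit claim is that it follows from Propositions~\ref{prop:existence} and~\ref{prop:uniqueness_perfect}. Your proposal correctly unpacks that implicit reasoning and, importantly, makes an observation the paper leaves unspoken: the nesting chain of Proposition~\ref{prop:seq_refinement} is useless for existence because it bounds $\Sigma^{\mathrm{ICPS,Seq}}$ from above, not below. Your decision to rerun the surplus-maximization contradiction of Proposition~\ref{prop:existence} against the sequential blocking notion is exactly the right move, and your use of the cost-adjusted potential $\Phi$ matches the remark at the end of that proof. The uniqueness reduction---observing that a perfect, free stage makes the optimal stopping rule trivial and collapses the sequential problem to the one-shot perfect-information case, whence Proposition~\ref{prop:uniqueness_perfect} applies---is clean and correct.

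One small caution worth tightening: your contradiction step asserts that $V_{f\ell}^{\mathrm{seq}} > \pi_f + \pi_\ell$ yields a feasible allocation $\mu'$ with $\Phi(\mu') > \Phi(\mu^\star)$. This inherits the same imprecision present in the paper's own proof of Proposition~\ref{prop:existence}: strictly speaking one needs that the supporting transfers $p^\star$ at a surplus-maximizing $\mu^\star$ satisfy the dual inequalities $\pi_f + \pi_\ell \ge \mathbb{E}[S(\theta_f,\theta_\ell)]$ for every pair, and that the sequential deviation value cannot exceed the unconditional expected surplus of the best rematching (since any acceptance rule is a coarsening of full information). You gesture at this with ``$V_{f\ell}^{\mathrm{seq}}$ is precisely the net joint surplus the pair can generate,'' which is the right intuition; making the LP-duality step explicit would close the gap fully, but since the paper itself does not do so, your argument is at the same level of rigor as the surrounding text.
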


\section{CONCLUSION}\label{sec:con}

This paper introduces \emph{Information-Credible Pairwise Stability} (ICPS), a refinement of Bayesian stability for matching markets with incomplete information. ICPS incorporates credible bilateral information acquisition into the blocking condition, shifting the source of stability from fear-driven beliefs to credible opportunities. I show that ICPS strictly refines Bayesian stability and converges to complete-information stability when tests are perfect and free. ICPS-stable allocations always exist, generate weakly higher welfare, and are positively assortative. Under sufficient test power, the stable matching is unique. The framework accommodates extensions such as correlated types, non-transferable utility, and endogenous testing. More broadly, ICPS provides a foundation for analyzing matching markets in information-rich environments.

	\pagebreak




\onehalfspacing
\bibliographystyle{apacite} 
\bibliography{matching}

\end{document}